\newtheorem{theorem}{Theorem}[section]
\newtheorem{example}[theorem]{Example}
\newtheorem{prop}[theorem]{Prop}
\newtheorem*{remark}{Remark}
\title{Strategies for the Iterated Prisoner's Dilemma}
    \author{Anagh Malik\thanks{am10118@ic.ac.uk}}
    \date{November 2020}
\begin{document}

\begin{titlingpage}
    \maketitle

    \begin{abstract}
    We explore some strategies which tend to perform well in the IPD. We start off by showing the significance of Tit-For-Tat strategies in evolutionary game theory. This is followed by a theoretical derivation of zero-determinant strategies, where we highlight an error on bounds for scale parameters from the original paper on ZD strategies\cite{press}. We then present examples of such strategies and create a custom player drawing inspiration from Markov Decision Processes.  At the end we pit them all against each other and see how they perform in an IPD tournament. Code can be found at \\ \href{https://github.com/anaghmalik/Axelrod}{https://github.com/anaghmalik/Axelrod}. 
    \end{abstract}
\end{titlingpage}

\tableofcontents

\newpage

\section{Note from the author}
I am especially proud of my Custom Strategy in \ref{sec:cust}, which draws inspiration from Reinforcement Learning and a unique interpretation of Tit-For-Tat strategies. I am also proud of the watertight explanation of the zero-determinant strategies, in which I heavily banked on Markov Processes and Linear Algebra to solidify the ideas from the paper \cite{press}. On page . 

\section{Introduction}

\subsection{Iterated Prisoner's Dilemma}
Imagine two thieves $X$, $Y$ try to rob a bank, but get caught by the police. Unfortunately the police don't have enough evidence to convict the burglars for the robbery, but they can charge them for smaller crimes. They decide to lock the prisoners up separately. The police then gives each of the prisoners a choice to either betray(D) their co-conspirators, giving them up to the authorities or to cooperate(C) with the other prisoner and not give them up. Then here are the possible outcomes:

\begin{itemize}
    \item $X$ and $Y$ both betray the other, then each of them serves two years in prison (or both get a payout of $P$)
\item $X$ betrays $Y$, but $Y$ remains silent, then $X$ won't face jail time, but $Y$ will serve three years in prison (or $X$ gets a payout of $T$ and $Y$ gets a payout of $S$)
\item $X$ remains silent, but $Y$ betrays $X$, then $X$ will serve three years in prison and $Y$ won't face jail time (or $Y$ gets a payout of $T$ and $X$ gets a payout of $S$)

\item If $X$ and $Y$ both remain silent, both serve only one year in prison (or both get a payout of $R$)
\end{itemize}

This premise is formalized in the Prisoners Dilemma, a two player 2x2 game represented by the following matrix:

\begin{equation}
\label{eqn:payoffmat}
\begin{game}{2}{2}
      & $C$     & $D$\\
$C$   & $R,R$  & $S,T$\\
$D$   & $T,S$   & $P,P$
\end{game}
\end{equation}

\noindent The tuple (T, R, P, S) has to be chosen very carefully for the game to actually be interesting.  Values often used include $(5, 3, 1, 0)$ or $(0, -1, -2, -3)$, where the second one is used in the description above. The most interesting behaviours in the game come up when we have the relation $T>R>P>S$. We can extend this setup to the \textbf{Iterated Prisoner's Dilemma} (IPD), where the players get to replay the game multiple times. 

\subsection{Axelrod's Tournament}

In 1980 a professor of University of Michigan, Robert Axelrod, held a tournament for various strategies competing in a game of the IPD. In total he received 14 entries for the competition \cite{kaznatcheev_2015}. He pitted them all against each other in an Iterated Prisoner's Game of length 200 (unknown to the participants). Some of the strategies included "always defect" and "always cooperate",  however, the best strategy turned out to be the Tit-For-Tat strategy reciprocating the opponent's previous move. It was submitted by Anatol Rapoport, a mathematician and psychologist also from the University of Michigan. 

\medskip

This strategy led people to believe some qualities were essential for a good IPD strategy \cite{plato}:

\begin{itemize}
    \item Nice - never the first to defect
    \item Retaliatory - cannot be exploited
    \item Forgiving - willing to cooperate despite being defected on earlier
    \item Clear - predictability makes it easier to cooperate 
\end{itemize}

Commonly a game of multiple strategies put against each other is referred to as an \textbf{Axelrod Tournament}.

\section{Prior Work}

 Research on the IPD has many implications in psychology, game theory, evolutionary biology and many other fields. It is thus not surprising that since Axelrod many other people have gone on to study the IPD. Below we will discuss some of the strategies and their implications mentioned in 2 significant sources. 
 
\subsection{The Calculus of Selfishness}
In his 2009 book \textit{The Calculus of Selfishness} \cite{calcofselfishness} the mathematician Karl Sigmund looks to answer questions about selfishness and cooperation from the perspective of game theory. It is thus natural that a lot of the discussion has implications for the IPD. In this section we will try to study the game of IPD from the point of view of replicator dynamics and see how some strategies evolve over time. We can consider the IPD as we formulated before in \ref{eqn:payoffmat}. Now since the games are iterated, we want a stochastic way to model how long a game lasts. We can thus introduce a variable $\omega \in (0,1)$. Then at each round with probability $\omega$ the game is played again. This can be thought of as a geometric distribution, where we are waiting for a success (the game ending), which has a probability $1-\omega$. Thus the expected game length is $\frac{1}{1-\omega}$. 

\medskip

We will now study the replicator dynamics with 3 strategies. Let these be Cooperators (who always cooperate), Defectors (who always defect), Reciprocators/ Tit-For-Tat (who reciprocate the opponents last move). Then we can introduce a three dimension vector $x$, where $x_1, x_2, x_3$ represents what fraction of the population are Cooperators, Defectors and Reciprocators respectively. Where obviously $x_1+x_2+x_3 = 1$. Then we can model the average payout per round expected for this population using the payoff matrix $A$. To fill in the values of A, we can consider quantities like:

\begin{itemize}
    \item $e_1 \cdot Ae_1 = R$, since if both players cooperate, then we get a payout of R each round
    \item $e_1 \cdot Ae_2 = S$, since if we cooperate against a Defector, we get a payout of S each round
    \item $e_2 \cdot Ae_3 = \frac{T + (\frac{1}{1-\omega}-1)P}{\frac{1}{1-\omega}} = (1-\omega)T + \omega P$, since if we defect against a Reciprocator then in the first round we get a payout of T, but in the subsequent rounds we get a payout of P
    \item etc.
\end{itemize}

hence completing the whole matrix with such considerations, we finally arrive at:

$$A =  \left(\begin{array}{ccc}R & S & R \\ T & P & (1-w) T+w P \\ R & (1-w) S+w P & R\end{array}\right)$$

where we can subtract multiples of $\mathbf{1}$ from the columns of the matrix, because that doesn't change the replicator dynamics. Hence the same replicator dynamics is obtained for:

\begin{align}
\label{eqn:repl}
A =  \left(\begin{array}{ccc}0 & 0 & 0 \\ T-R & P-S & (1-w) T+w P -R\\ 0 & w(P-S) & 0\end{array}\right)
\end{align}

we can clearly see our deliberations will have two cases. When $(1-\omega)T+\omega T > R$ then $e_2$ will be the global Best Response and when $(1-\omega)T+\omega T < R$, we will have $BR(e_1) = e_2, BR(e_2) = e_2, BR(e_3) = \langle e_1, e_3 \rangle$. We can also write out:

\begin{align*}
\begin{cases}
    (Ax)_1 = 0 \\
    (Ax)_2 = x_1 (T-R) + x_2 (P-S) + x_3 ((1-\omega)T + \omega P - R)\\
    (Ax)_3 = \omega (P-S)x_2
    \end{cases}
\end{align*}

now unfortunately at this stage our equations have too many variables and we wouldn't be able to make interesting claims about them, hence we may parametrize our variables by $b, c \in \mathbb{R_+}$, where $b>c$, with $(T, R, P, S) = (b, b-c, 0, -c)$. Then our case $(1-\omega)T+\omega T < R$ reduces to $\omega > \frac{c}{b}$ and this is the case we will consider from now on, since $\omega > \frac{c}{b}$ means the games last longer and these are the games we are interested in. Our previous equations reduce to:

\begin{align*}
\begin{cases}
    (Ax)_1 = 0 \\
    (Ax)_2 = c - \omega b x_3\\
    (Ax)_3 = \omega c x_2
    \end{cases}
\end{align*}

hence we get the indifference lines $Z_{1,2} = \{x_3 = \frac{c}{b}\}$, $Z_{1,3} = \{x_2 = 0\}$, $Z_{2,3} = \{cx_2 + bx_3 = \frac{c}{\omega}\}$. We can expand out the replicator equations to finally get the following phase portrait:

\begin{figure}[H]
\label{fig:repl}
\centering
  \includegraphics[width=0.7\linewidth]{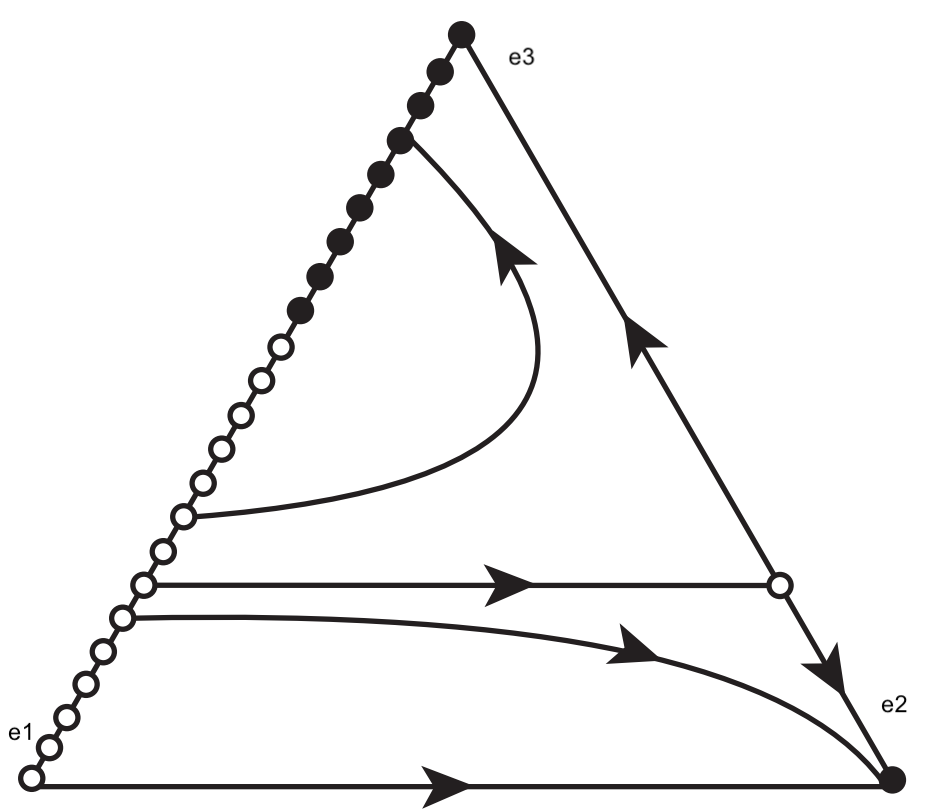}
  \caption{Phase portrait for the system \ref{eqn:repl}, where $(T, R, P, 0) =  (b, b-c, 0, -c)$ and $\omega > \frac{c}{b}$. Filled dots are attracting, whereas white ones are repelling/saddle. From \cite{calcofselfishness}, edited by Anagh Malik.}
\end{figure}

The phase portrait has a very nice interpretation. We can see for populations starting below the line $x_3 = \frac{c}{b}$ the Defectors start dominating and eventually we end up with a population of only those. But this is what we would expect, since always defecting would seem like a "safe bet" - it is a strategy which can't be easily exploited and it is the Prisoners Dilemma's only Nash Equilibrium. However interestingly we can see for populations starting above the line $x_3 = \frac{c}{b}$ the Defectors die out and we are left with a population of more Reciprocators and some Cooperators. This shows the importance of the Reciprocators, who we see have a significance in evolutionary analysis of the IPD. It is good to notice how important our starting position is, as per that we converge to totally different final populations!

\subsection{Zero-determinant Strategies}
In their 2012 paper \cite{press} Press and Dyson claimed to have found a dominant set of strategies they referred to as the \textbf{zero-determinant strategies}. Suppose we are dealing with players $X$ and $Y$ pitted against each other in an IPD game. Intuitively one might think that if Y has a longer memory of previous actions taken by the players in the game than X then it is advantaged. Theoretically we can show that this is not true. 

\begin{prop}
Let $X$, $Y$ be two players playing the IPD game with actions $x, y \in \mathbb{A}=\{C,D\}$. Let the history of previous rounds $H = [H_0, H_1]$, where $H_0$ is the finite history both $X, Y$ have access to  and $H_1$ is the finite history only available to $Y$ i.e. $x$ is independent of $H_1$. We can show that $\forall x, y \in \mathbb{A}$, $\mathbb{E}(P(x,y|H_0, H_1)) = \mathbb{E}(P(x,y|H_0))$, showing that averaged out the probability of an outcome conditioned on the longer history is the same as if conditioned on the shorter history. 
\end{prop}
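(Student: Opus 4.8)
The plan is to read $\mathbb{E}(P(x,y \mid H_0, H_1))$ as an expectation over the extra history $H_1$ conditioned on the shared history $H_0$, and to reduce the claim to two structural facts about the game together with the tower property. First I would note that because the two players move simultaneously in each round, each choosing according to their own (possibly stochastic) strategy applied only to the history they can observe, the two actions are conditionally independent given the history:
\[
P(x,y \mid H_0, H_1) = P(x \mid H_0, H_1)\, P(y \mid H_0, H_1).
\]
Second, the hypothesis that $x$ is independent of $H_1$ (player $X$ never sees $H_1$) gives $P(x \mid H_0, H_1) = P(x \mid H_0)$, so the first factor does not depend on $H_1$ at all.

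With these two reductions in hand, I would pull the $H_1$-independent factor out of the expectation and apply the law of total expectation to the remaining factor:
\[
\mathbb{E}_{H_1 \mid H_0}\!\big[P(x,y \mid H_0, H_1)\big]
= P(x \mid H_0)\, \mathbb{E}_{H_1 \mid H_0}\!\big[P(y \mid H_0, H_1)\big]
= P(x \mid H_0)\, P(y \mid H_0),
\]
the last step being the tower property $\mathbb{E}_{H_1\mid H_0}[P(y \mid H_0, H_1)] = P(y \mid H_0)$. Applying the same simultaneous-move conditional independence at the coarser level gives $P(x,y \mid H_0) = P(x \mid H_0)\, P(y \mid H_0)$, and comparing the two displays yields the claim $\mathbb{E}(P(x,y \mid H_0, H_1)) = P(x,y \mid H_0)$ for every $x,y \in \mathbb{A}$.

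The conceptual content lies almost entirely in the setup rather than in the computation: once the right objects are fixed, the algebra is a one-line application of the tower property. I expect the main obstacle to be stating precisely what the outer expectation averages over and justifying the factorisation of the joint outcome probability. In particular one must be careful that ``$x$ is independent of $H_1$'' is being used as conditional independence given $H_0$, and that it is the \emph{simultaneous}-move structure that licenses splitting the joint probability into a product of single-player strategy probabilities; in a sequential formulation $Y$'s move could depend on the realised $x$ and the clean product form would break down. Making these modelling assumptions explicit is where the care is needed, after which the result follows immediately.
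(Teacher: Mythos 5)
Your proposal is correct and follows essentially the same route as the paper: factor $P(x,y\mid H_0,H_1)$ using the conditional independence of the simultaneous moves, replace $P(x\mid H_0,H_1)$ by $P(x\mid H_0)$, and average out $H_1$ via the tower property (the paper's ``Bayes'' plus ``TTP'' steps are exactly this), before recombining at the coarser level. The only differences are cosmetic --- you state the conditional-on-$H_0$ version where the paper takes full outer expectations over the histories, and you make explicit the factorisation $P(x,y\mid H_0)=P(x\mid H_0)P(y\mid H_0)$ that the paper leaves implicit in its final ``def'' step.
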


\begin{proof}

\begin{align*}
\mathbb{E}_H(P(x,y|H_0, H_1))
\stackrel{def}{=}
 \sum_{H_{0}, H_{1}} P\left(x, y \mid H_{0}, H_{1}\right) P\left(H_{0}, H_{1}\right) \stackrel{independence}{=} \\ 
\sum_{H_{0}, H_{1}} P\left(x \mid H_{0}\right) P\left(y \mid H_{0}, H_{1}\right) P\left(H_{0}, H_{1}\right) \stackrel{Bayes}{=} \\ 
\sum_{H_{0}} P\left(x \mid H_{0}\right)\left[\sum_{H_{1}} P\left(y \mid H_{0}, H_{1}\right) P\left(H_{1} \mid H_{0}\right) P\left(H_{0}\right)\right] \stackrel{Bayes}{=} \\
\sum_{H_{0}} P\left(x \mid H_{0}\right)\left[\sum_{H_{1}} P\left(y, H_{1} \mid H_{0}\right)\right] P\left(H_{0}\right) \stackrel{TTP}{=} \\
\sum_{H_{0}} P\left(x \mid H_{0}\right) P\left(y \mid H_{0}\right) P\left(H_{0}\right) \stackrel{def}{=} \\
\mathbb{E}_{H_0}(P(x,y|H_0))
\end{align*}

Where by Bayes we mean the Bayes rule and by TTP we mean the Theorem of Total Probability. Intuitively this means that given a strategy $Y$ with long memory, we can find another short memory strategy, which averaged out performs the same as original strategy against a short memory strategy. Even though this result is theoretical, it has more of an empirical undertone. To make sense of the result it's good to think that we can play a shorter memory strategy, which over a long period will spend the same proportion of time in the state as the longer memory strategy, thus getting the same rewards. 

\end{proof}

This now suggests that given players X, Y, only the length of the shared memory is important, thus if we design a strategy for a player with a short memory, our opponents can't take "advantage" of this. Hence it makes sense for us to design a strategy only conditioned on outcome of the last round. 

\medskip

Let $\Gamma = \{cc, cd, dc, dd\}$ represent the 4 possible outcomes of a game - where c means cooperation and d defection. Then let $\bf{p} \in [0,1]^4$, where $\bf{p}_i$ represents the probability of player X choosing to cooperate, given the outcome of the last round was $xy = \Gamma_i$. We can similarly set a vector $\bf{q} \in [0,1]^4$, where $\bf{q}_i$ represents the conditional probability to cooperate for player Y given the outcome of the previous game was $yx = \Gamma_i$. 

\medskip

We can now construct a $4x4$ matrix $\bf{M}$, such that:

$$\mathbf{M}_{ij} = P((xy)_t = \Gamma_j \mid (xy)_{t-1} = \Gamma_i) $$

where $(xy)_t$ is the outcome of the t-th game. Of course due to time-homogeneity, $\mathbf{M}_{ij}$ isn't dependent on the particular value of t. We can thus use the multiplication rule for probabilities, to see:

\begin{align}
\label{eqn:markovmat}
\mathbf{M(p,q)} = \begin{bmatrix}
    p_1q_1 & p_1(1-q_1) & (1-p_1)q_1 &  (1-p_1)(1-q_1) \\
    p_2q_3 & p_2(1-q_3) & (1-p_2)q_3 &  (1-p_2)(1-q_3) \\
    p_3q_2 & p_3(1-q_2) & (1-p_3)q_2 &  (1-p_3)(1-q_2) \\
    p_4q_4 & p_4(1-q_4) & (1-p_4)q_4 &  (1-p_4)(1-q_4)
\end{bmatrix}
\end{align}

\medskip

Drawing inspiration from stochastic processes, the collection of random variables $\{(XY)_t\}_{t \in \mathbb{N}}$ taking values in $\Gamma$ is a Markov Chain, with transition matrix given by $\mathbf{M(p,q)}$. Suppose now that $\{(XY)_t\}_{t \in \mathbb{N}}$ is an irreducible chain, then it has a unique stationary distribution $\mathbf{v} \in \mathbb{R}^4$ i.e $\mathbf{v}^t \mathbf{M} = \mathbf{v}^t$. The condition for an irreducible chain isn't very restrictive on $\mathbf{p}, \mathbf{q}$. The study of these chains is particularly interesting, since irrespective of the starting distribution the state marginal distribution converges to the stationary distribution\footnote{These are all standard results from the theory of Markov Processes.}. Assuming a quick enough rate of convergence, we can easily calculate the expected rewards for each player per round. Then to compare the payout between two players in a game it would be enough to compare these values, since we can assume each round to have this stationary distribution for the states. With this we can move on to the next big theorem in this section:

\begin{theorem}
\label{thm:doteq}
Given a transition matrix (of an irreducible chain) $\mathbf{M(p,q)}$ as before and its stationary distribution $\mathbf{v}$, for any $f \in \mathbb{R}^4$:

$$\mathbf{v} \cdot f = \frac{D(\mathbf{p}, \mathbf{q}, f)}{D(\mathbf{p}, \mathbf{q}, \mathbf{1})}$$

where for any $b \in \mathbb{R}^4$, we define:

\begin{equation}
\label{eqn:det}
D(\mathbf{p}, \mathbf{q}, b) = \det \begin{bmatrix}
    p_1q_1-1 & p_1 - 1 & q_1-1 &  b_1 \\
    p_2q_3 & p_2 - 1 & q_3 &  b_2 \\
    p_3q_2 & p_3 & q_2-1 &  b_3 \\
    p_4q_4 & p_4 & q_4 &  b_4
\end{bmatrix}
\end{equation}
\end{theorem}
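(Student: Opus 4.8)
The plan is to reduce the determinant $D$ to a statement about the matrix $\mathbf{M} - I$ and then exploit the fact that the stationary vector spans its one-dimensional left null space. Write $\tilde{\mathbf{M}} = \mathbf{M} - I$ and let $C_1, C_2, C_3, C_4$ denote its columns. By definition of the stationary distribution, $\mathbf{v}^{t}\tilde{\mathbf{M}} = 0$; since the chain is irreducible, $\tilde{\mathbf{M}}$ has rank $3$, so its left null space is exactly the line spanned by $\mathbf{v}$. Because $\mathbf{M}$ is row-stochastic we also have $\tilde{\mathbf{M}}\mathbf{1} = 0$, so the right null space is the line spanned by $\mathbf{1}$.

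First I would relate the matrix appearing in (\ref{eqn:det}) to $\tilde{\mathbf{M}}$. Comparing (\ref{eqn:markovmat}) with (\ref{eqn:det}) entry by entry, a short computation shows that the first column of the $D$-matrix is exactly $C_1$, the second column equals $C_1 + C_2$, and the third column equals $C_1 + C_3$ (for instance, in row $1$ one checks $(p_1q_1 - 1) + p_1(1 - q_1) = p_1 - 1$, and similarly for the other entries). Since subtracting one column from another leaves a determinant unchanged, I can subtract $C_1$ from the second and third columns to obtain
\[
D(\mathbf{p}, \mathbf{q}, b) = \det\big[\, C_1 \mid C_2 \mid C_3 \mid b \,\big],
\]
that is, $D(\mathbf{p}, \mathbf{q}, b)$ is the determinant of $\tilde{\mathbf{M}}$ with its last column replaced by the arbitrary vector $b$.

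Next I would expand this determinant along the final column, giving $D(\mathbf{p}, \mathbf{q}, b) = \sum_{i=1}^{4} b_i A_i = \mathbf{A}\cdot b$, where $A_i = (-1)^{i+4}$ times the minor of $\tilde{\mathbf{M}}$ obtained by deleting row $i$ and column $4$. These $A_i$ are precisely the entries of the fourth row of the adjugate $\operatorname{adj}(\tilde{\mathbf{M}})$. Since $\det \tilde{\mathbf{M}} = 0$, the adjugate identity gives $\operatorname{adj}(\tilde{\mathbf{M}})\,\tilde{\mathbf{M}} = (\det\tilde{\mathbf{M}})\,I = 0$, so every row of $\operatorname{adj}(\tilde{\mathbf{M}})$ lies in the left null space of $\tilde{\mathbf{M}}$, i.e. is a scalar multiple of $\mathbf{v}$. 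Dually, $\tilde{\mathbf{M}}\,\operatorname{adj}(\tilde{\mathbf{M}}) = 0$ forces the columns to be multiples of $\mathbf{1}$, so $\operatorname{adj}(\tilde{\mathbf{M}}) = c\,\mathbf{1}\mathbf{v}^{t}$ for some scalar $c$, which is nonzero because a rank-$3$ matrix has a rank-$1$ (hence nonzero) adjugate. Reading off the fourth row gives $\mathbf{A} = c\,\mathbf{v}$ with $c \neq 0$.

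Finally I would substitute. For any $f$,
\[
\frac{D(\mathbf{p}, \mathbf{q}, f)}{D(\mathbf{p}, \mathbf{q}, \mathbf{1})} = \frac{\mathbf{A}\cdot f}{\mathbf{A}\cdot \mathbf{1}} = \frac{c\,\mathbf{v}\cdot f}{c\,\mathbf{v}\cdot \mathbf{1}} = \frac{\mathbf{v}\cdot f}{\mathbf{v}\cdot \mathbf{1}} = \mathbf{v}\cdot f,
\]
using $c \neq 0$ and the normalization $\mathbf{v}\cdot\mathbf{1} = 1$ of the stationary distribution, which is exactly the claim. I expect the main obstacle to be spotting the column-combination identity in the first step: it is the non-obvious bridge that turns the ad hoc determinant $D$ into $\mathbf{M} - I$ with a replaced column, after which everything is standard linear algebra (cofactor expansion plus the adjugate/null-space fact). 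A secondary point worth handling carefully is justifying $c \neq 0$, which is where irreducibility is genuinely used: it guarantees $\operatorname{rank}\tilde{\mathbf{M}} = 3$, and hence both a one-dimensional left null space and a nonzero adjugate.
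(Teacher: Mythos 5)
Your proposal is correct and follows essentially the same route as the paper's proof: identifying the columns of the $D$-matrix with $C_1$, $C_1+C_2$, $C_1+C_3$ of $\mathbf{M}-I$, expanding along the last column, and invoking $\operatorname{adj}(\mathbf{M}-I)\,(\mathbf{M}-I)=0$ together with uniqueness of the stationary distribution to identify the resulting cofactor vector with a multiple of $\mathbf{v}$. Your extra observation that $(\mathbf{M}-I)\operatorname{adj}(\mathbf{M}-I)=0$ forces $\operatorname{adj}(\mathbf{M}-I)=c\,\mathbf{1}\mathbf{v}^{t}$ with $c\neq 0$ is a nice touch that rigorously justifies using the \emph{fourth} row of the adjugate specifically, a point the paper passes over with a ``wlog''.
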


\begin{proof}

Notice $\mathbf{M}\mathbf{1} = \mathbf{1}$, since it is a stochastic matrix. Then the matrix $\mathbf{M'} = \mathbf{M} - \mathbf{I}$ is singular, since $\mathbf{M'}\mathbf{1}=0$. Cramer's Rule \cite{chandra2013notes} tells us that for any matrix A, $\operatorname{adj}(A) A = \det(A) \mathbf{I}$, hence applying Cramer's rule to $\mathbf{M'}$, we see that $\operatorname{adj}(\mathbf{M'}) \mathbf{M'} = 0$, as a singular matrix has a determinant of 0. 

\medskip
Now notice since we assumed the chain is irreducible, $\mathbf{v}$ is the unique vector (up to constant) such that $\mathbf{v}^t \mathbf{M} = \mathbf{v^t}$. Thus $\mathbf{v}$ is also the unique (up to constant) vector such that $\mathbf{v}^t\mathbf{M'} = 0$. So we can deduce that the non-zero rows of $\operatorname{adj}(\mathbf{M'})$ are proportional to $\mathbf{v}$. Thus we can use the non-zero rows of the adjoint to find an expression proportional to the dot product of a vector with $\mathbf{v}$. 

\medskip
If for an nxn matrix A, $\operatorname{adj}(A) = 0$, then we would have $\operatorname{rank}(A) \leq n-2$. But here clearly (as $\mathbf{v}$ is the unique stationary distribution) $\operatorname{rank}(\mathbf{M'}) =3$, hence there is a non-zero row in $\operatorname{adj}(\mathbf{M'})$. Hence wlog we can look at first row of $\operatorname{adj}(\mathbf{M'})$ to get an expression for the dot product with $\mathbf{v}$. To simplify the notation for later, we also introduce the matrix:

$$\mathbf{M''} = \begin{bmatrix}
    p_1q_1 -1 & p_1 -1 & q_1-1 &  (1-p_1)(1-q_1) \\
    p_2q_3 & p_2 -1 & q_3 &  (1-p_2)(1-q_3) \\
    p_3q_2 & p_3 & q_2 -1 &  (1-p_3)(1-q_2) \\
    p_4q_4 & p_4 & q_4 &  (1-p_4)(1-q_4) -1
\end{bmatrix}
$$

then due to simple determinant operations (to obtain $\mathbf{M''}$ we just added the first column of $\mathbf{M'}$ to second and third one) we see:

\begin{align*}
\mathbf{v} \cdot f = K(f_1 \mathbf{M'}_{1,4} + f_2\mathbf{M'}_{2,4} + f_3\mathbf{M'}_{3,4} + f_4\mathbf{M'}_{4,4}) = \\
K(f_1 \mathbf{M''}_{1,4} + f_2\mathbf{M''}_{2,4} + f_3\mathbf{M''}_{3,4} + f_4\mathbf{M''}_{4,4})
\end{align*}

where $K \in \mathbb{R}$ is some proportionality constant and $\mathbf{M}^{(k)}_{i,j}$ is the (i,j)-th minor of $\mathbf{M}^{(k)}$ for $k=1, 2$. We can further expand the minors:

\begin{align*}
\mathbf{v} \cdot f = K(f_1 \mathbf{M''}_{1,4} + f_2\mathbf{M''}_{2,4} + f_3\mathbf{M''}_{3,4} + f_4\mathbf{M''}_{4,4}) = \\
K \left(
f_1 \det \begin{bmatrix}
    p_2q_3 & p_2 -1 & q_3  \\
    p_3q_2 & p_3 & q_2 -1  \\
    p_4q_4 & p_4 & q_4
\end{bmatrix} -
f_2 \det \begin{bmatrix}
    p_1q_1-1 & p_1 -1 & q_1-1  \\
    p_3q_2 & p_3 & q_2 -1 \\
    p_4q_4 & p_4 & q_4
\end{bmatrix} + \right. \\ 
\left. f_3 \det \begin{bmatrix}
    p_1q_1-1 & p_1-1 & q_1-1  \\
    p_2q_3 & p_2-1 & q_3  \\
    p_4q_4 & p_4 & q_4
\end{bmatrix} -
f_4 \det \begin{bmatrix}
    p_1q_1-1 & p_1-1 & q_1-1  \\
    p_2q_3 & p_2-1 & q_3  \\
    p_3q_2 & p_3 & q_2-1  
\end{bmatrix} \right) = \\
K \det \begin{bmatrix}
    p_1q_1-1 & p_1 - 1 & q_1-1 &  f_1 \\
    p_2q_3 & p_2 - 1 & q_3 &  f_2 \\
    p_3q_2 & p_3 & q_2-1 &  f_3 \\
    p_4q_4 & p_4 & q_4 &  f_4
\end{bmatrix} =\\
K D(\mathbf{p}, \mathbf{q}, f)
\end{align*}

We can calculate the exact value of $K$ using the fact that $\mathbf{v}$ is a distribution (values sum to 1):

\begin{align*}
1 = \mathbf{v} \cdot \mathbf{1} =  K D(\mathbf{p}, \mathbf{q}, \mathbf{1})
\implies
K = \frac{1}{D(\mathbf{p}, \mathbf{q}, \mathbf{1})}
\end{align*}

where we can divide out by $D(\mathbf{p}, \mathbf{q}, \mathbf{1})$, because if it was 0 the first equation would be false. Hence combining the two results, we get:

$$\mathbf{v} \cdot f = \frac{D(\mathbf{p}, \mathbf{q}, f)}{D(\mathbf{p}, \mathbf{q}, \mathbf{1})}$$

as required. 

\end{proof}

Now given $\mathbf{p}, \mathbf{q}$ we can calculate the expected score of player $X, Y$ in the case of the marginal state distribution being the stationary distribution of $\mathbf{M}(\mathbf{p}, \mathbf{q})$ (remember that starting with any initial distribution we converge to the stationary one). We calculate this expected score assuming the payoff matrix as in \ref{eqn:payoffmat}, let $s_x, s_y$ be the expected scores given the stationary distribution $\mathbf{v}$ for $X$ and $Y$ respectively, then:

$$
s_x = Rv_1 + Sv_2 + Tv_3 + Pv_4 = v \cdot (R, S, T, P) = \frac{D(\mathbf{p}, \mathbf{q}, S_x)}{D(\mathbf{p}, \mathbf{q}, \mathbf{1})}
$$

$$s_y = Rv_1 + Tv_2 + Sv_3 + Pv_4 = v \cdot (R, T, S, P) = \frac{D(\mathbf{p}, \mathbf{q}, S_y)}{D(\mathbf{p}, \mathbf{q}, \mathbf{1})}
$$

where $S_x = (R, S, T, P)$ and $S_y = (R, T, S, P)$. Due to the linearity of the dot product notice that the operator $D(\mathbf{p}, \mathbf{q}, .)$ is linear in the third argument. We can thus get the expected value of a linear combination of the scores:

$$\alpha s_x + \beta s_y + \gamma = \frac{D(\mathbf{p}, \mathbf{q}, \alpha S_x + \beta S_y + \gamma \mathbf{1})}{D(\mathbf{p}, \mathbf{q}, \mathbf{1})} $$

Now it's important to notice that the player $X$ completely controls the second column of the matrix $\hat{p}= (p_1 - 1, p_2-1, p_3, p_4)$ in \ref{eqn:det} - by setting the vector $\mathbf{p}$. Similarly the player $Y$ completely controls the third row of the same matrix $\hat{q} = (q_1 - 1, q_3, q_2-1, q_4)$ by fixing $\mathbf{q}$. Suppose now that $X$ picks a strategy where $\hat{p} = \alpha S_x + \beta S_y + \gamma \mathbf{1}$ or similarly $Y$ picks a strategy where $\hat{q} = \alpha S_x + \beta S_y + \gamma \mathbf{1}$ for some $\alpha, \beta, \gamma \in \mathbb{R}$\footnote{Such choices for $\hat{p}, \hat{q}$ aren't always possible due to probability constraints, but assume they are for now.}. Then: 

\begin{equation}
\label{eq:zerodet}
\alpha s_x + \beta s_y + \gamma = \frac{D(\mathbf{p}, \mathbf{q}, \alpha S_x + \beta S_y + \gamma \mathbf{1})}{D(\mathbf{p}, \mathbf{q}, \mathbf{1})} = 0
\end{equation}

due to the matrix in the definition of the numerator becoming zero-determinant. Choices for $\mathbf{p}, \mathbf{q}$, which cause this are called zero-determinant (ZD) strategies.

\subsubsection{X sets Y's score}
Notice this way \textbf{$X$ can attempt to set $Y$'s expected score} in the stationary distribution. Let $X$ choose a ZD strategy such that $\hat{p} = \beta S_y + \gamma \mathbf{1}$ i.e:

\begin{equation}
\label{eqn:xsetsscore}
\mathbf{p} = \beta \begin{bmatrix}
           R \\
           T \\
           S \\
           P
         \end{bmatrix} +
         \begin{bmatrix}
           \gamma + 1 \\
           \gamma + 1 \\
           \gamma \\
           \gamma
         \end{bmatrix}
\end{equation}

Then $X$ enforces the linear constraint:
$$\beta s_y + \gamma = 0$$

Previously $\beta, \gamma$ have been fixed, but we want to get rid of this dependency, because they don't always give us actual solutions (due to probability conditions). Hence can first solve the equations \ref{eqn:xsetsscore} for $\beta, \gamma$ in terms of $p_1, p_4$. We get:

$$
\beta = \frac{p_1 - p_4 -1}{R-P}
$$
$$
\gamma = \frac{Rp_4 - Pp_1 +P}{R-P}
$$

Now plugging this into the second and third equation in \ref{eqn:xsetsscore} we can solve for $p_2, p_3$:

\begin{align*}
        p_2 = \beta T + \gamma + 1 = \frac{p_1 - p_4 -1}{R-P} T + \frac{Rp_4 - Pp_1 +P}{R-P} + 1 =\\ \frac{p_1(T-P) - (1+p_4)(T-R)}{R-P}
\end{align*}

\begin{align*}
         p_3 = \beta S + \gamma = \frac{p_1 - p_4 -1}{R-P} S + \frac{Rp_4 - Pp_1 +P}{R-P} =\\ \frac{(1-p_1)(P-S) + p_4(R-S)}{R-P}   
\end{align*}

Of course solutions to these are highly dependent on the exact values we end up choosing for $T, R, P, S$. However having the aforementioned inequality $T>R>P>S$ in mind, we can see that these equations will always have a solution when $p_1$ will be close to 1 and $p_4$ will be close to 0 (to see this you can plug in these values to the equation, then the result follows by continuity). In this case from the equations we can see $p_2$ will be close to 1 and $p_3$ will be close to 0. From equation \ref{eq:zerodet} we can also see that:

\begin{align}
    s_y = -\frac{\gamma}{\beta} = \frac{Rp_4+P(1-p_1)}{p_4 + (1-p_1)}
\end{align}

This means that the score is a weighted average of R and P, with the weights being $1-p_1$ and $p_4$. This forces $s_y$ to be strictly between P and R. Interestingly $X$ gets to set this value by choosing $p_1$ and $p_4$, thus setting Y's expected score in the stationary distribution! The best part about it is that this strategy $\mathbf{p}$ is independent of $\mathbf{q}$, so $X$ can set $Y$'s score independent of the strategy taken by $Y$. We can also interpret this result intuitively. It makes sense for $X$ to only be able to set a score between P and R. $Y$ can guarantee a minimum score of $P$ by always defecting. However $X$ can't guarantee $Y$ a score larger than $R$ (by just being able to control his own actions), since this would require $Y$ to defect.

\begin{example}
We will work out such strategies for the case $(T, R, P, S) = (5, 3, 1, 0)$. In this case the equations for $p_2, p_3, s_y$ reduce to:

\begin{align*}
    p_2 = 2p_1-(1+p_4)\\
    p_3 = \frac{(1-p_1) + 3p_4}{2}\\
    s_y = \frac{3p_4+(1-p_1)}{p_4 + (1-p_1)}
\end{align*}

Looking at the last equality let's first set $p_4$ to 0 (to enforce a low score $s_y$). Then from the first equality we need $\frac{1}{2} \leq p_1 \leq 1$. We can pick $p_1 = \frac{1}{2}$ and $p_1 = \frac{3}{4}$ to get two strategies enforcing $s_y = 1$. These are $\mathbf{p} = (\frac{1}{2}, 0, \frac{1}{4}, 0)$ and $\mathbf{p} = (\frac{3}{4}, \frac{1}{2}, \frac{1}{8}, 0)$ respectively.

\smallskip

We can also pick $s_y$ more moderately setting $p_4 = \frac{1}{2}$. We then have $\frac{3}{4} \leq p_1 \leq 1$. Picking extreme ends we get $\mathbf{p} = (\frac{3}{4}, 0, \frac{7}{8}, \frac{1}{2})$, where $s_y = \frac{7}{3}$ and  $\mathbf{p} = (1, \frac{1}{2}, \frac{3}{4}, \frac{1}{2})$, where $s_y = 3$.

\end{example}

\subsubsection{X sets her own score}

What if \textbf{$X$ tries to fix it's own expected score} in the stationary distribution case? We can attempt to do this by setting $\hat{p} = \alpha S_x + \gamma \mathbf{1}$, to get $\alpha s_x + \gamma = 0$. Unfortunately in this case we don't get much freedom to maneuver $p_1$ and $p_4$. Following the same calculation as before (removing the parameters and conditioning on $p_1$ and $p_4$) we get:

\begin{align}
    p_2 = \frac{(1+p_4)(R-S) - p_1(P-S)}{R-P} \geq 1\\
    p_3 = \frac{-(1-p_1)(T-P) - p_4(T-R)}{R-P} \leq 0
\end{align}

where the first inequality can be seen by plugging in the minimizing values of $p_1 = 1$ and $p_4 = 0$, while the second by plugging in the maximizing values of $p_1 = 1$ and $p_4 = 0$\footnote{Here we again make use of the inequality $T>R>P>S$}. Hence the only strategy we get here is $\mathbf{p} = (1, 1, 0, 0)$ - which irrespective of $Y$'s strategy $\mathbf{q}$, would cause $\{(XY)_t\}_{t \in \mathbb{N}}$ to be an irreducible Markov Chain. Hence due to the probability restrictions $X$ is not able to set it's own score to any desired value by just changing $\mathbf{p}$.

\subsubsection{Extortionate Strategies}
Finally we introduce the most sophisticated strategies, extortionate strategies, where \textbf{X tries to enforce an extortionate share of the payoff} - forcing his share to be larger than P. This can be done by choosing the strategy:

\begin{align}
\label{eqn:extr}
\hat{p} = \Phi((S_x - P\mathbf{1}) - \chi(S_y - P\mathbf{1}))
\end{align}

where $\Phi$ and $\chi$ are constants. This enforces the condition: 

$$\Phi((s_x - P) - \chi(s_y - P)) = 0$$

or more succinctly:

\begin{equation}
\label{eqn:ratio}
    s_x - P = \chi(s_y - P)
\end{equation}

from here we can see that such a condition with the additional assumptions of $s_x > P$ and $\chi \geq 1$ gives $X$ a larger payoff than $Y$ and this also motivates the name given to the parameter $\chi$, the \textit{extortion factor}. We can solve \ref{eqn:extr} for $\mathbf{p}$, by first writing it out in vector form:

\begin{equation}
    \mathbf{p} = \begin{bmatrix}
           1 \\
           1 \\
           0 \\
           0
         \end{bmatrix} +
         \Phi \left( \left(\begin{bmatrix}
           R \\
           S \\
           T \\
           P
         \end{bmatrix} - \begin{bmatrix}
           P \\
           P \\
           P \\
           P
         \end{bmatrix} \right) - \chi \left(
         \begin{bmatrix}
           R \\
           T \\
           S \\
           P
         \end{bmatrix}
         - \begin{bmatrix}
           P \\
           P \\
           P \\
           P
         \end{bmatrix} \right) \right)
\end{equation}

we can rescale this equation, by introducing $\phi = \Phi(P-S)$. Then by further simplification, we finally get:

\begin{align}
\label{eqn:pvalues}
\begin{dcases}
p_{1}=1-\phi(\chi-1) \dfrac{R-P}{P-S} \\
p_{2}=1-\phi\left(1+\chi \dfrac{T-P}{P-S}\right) \\
p_{3}=\phi\left(\chi+\dfrac{T-P}{P-S}\right) \\
\displaystyle
p_{4}=0
\end{dcases} 
\end{align}

Fixing $\chi$, even from \ref{eqn:extr} we see that $\phi$ acts as a scale parameters and thus  by properly adjusting it, the above system of equations always has a solution. We can study the inequalities imposed by the first three of the above equations on $\phi$:

\begin{align*}
\begin{dcases}
0 \leq \phi \leq \dfrac{P-S}{(\chi - 1) (R-P)} \\
0 \leq \phi \leq \dfrac{P-S}{\chi(T-P) + (P-S)} \\
0 \leq \phi \leq \dfrac{P-S}{\chi(P-S) + (T-P)} 
\end{dcases}
\end{align*}

where clearly the second and third inequalities are the most restrictive - unfortunately we can't say which one is more restrictive in general. Also notice that $\phi = 0$, gives only the singular (non irreducible) strategy $\mathbf{p} = (1,1,0,0)$. Hence taking both of these things into consideration, the bound we get for $\phi$ is:

\begin{align}
\label{eqn:phibound}
0 < \phi \leq \min \left\{ \frac{P-S}{\chi(T-P) + (P-S)}, \frac{P-S}{\chi(P-S) + (T-P)}\right\}
\end{align}

The lower bound changes (min is established) depending on whether $T+S>2P$ or $T+S \leq 2P$. We can see this by analyzing when the difference of the numerators is positive (one is greater than the other): 

$$\chi(T-P) + (P-S) - \chi(P-S) - (T-P) = \chi (T+S-2P) - (T+S-2P)$$ 

which is a line with the slope sign dependent on the previously quoted inequality. If $2P > T+S$ then the slope is negative and for all values of $\chi > 1$ the line takes on negative values, thus giving us a different minimum bound (the second element of the minimum operator in \ref{eqn:phibound} is smaller). 

\begin{remark}
\label{eqn:rem}
In the paper the inequality has just been given as:
$$
0 < \phi \leq  \frac{P-S}{\chi(T-P) + (P-S)}$$

however as previously mentioned this is just a necessary inequality, but it is not sufficient i.e. not all values of $\phi$ in this range are allowed. To see this more clearly we can use the values $(T, R, P, S) = (1.5, 1.25, 1, 0)$. These are permitted, since $T>R>P>S$. However we have $2P > T+S$, hence:

\begin{align*}
    \frac{P-S}{\chi(T-P) + (P-S)} = \frac{1}{0.5\chi + 1} > \frac{1}{\chi + 0.5} = \frac{P-S}{\chi(P-S) + (T-P)}
\end{align*}

for $\chi>1$. 
\end{remark}

\medskip
Now notice that due to \ref{eqn:ratio}, both the scores will be maximized at once. Furthermore they strategy $Y$ has to take to maximize the scores is "always cooperate" i.e. $\mathbf{q} = \mathbf{1}$. This is obvious since irrespective of $X$'s action, $X$ always gets a larger payoff if $Y$ cooperates. We can now calculate the score for $X$ in this case, using the solution for $\mathbf{p}$ from \ref{eqn:pvalues}:

\begingroup
\addtolength{\jot}{0.3em}
\begin{align*}
    s_x = \frac{D(\mathbf{p}, \mathbf{q}, S_x)}{D(\mathbf{p}, \mathbf{q}, \mathbf{1})} \leq \frac{D(\mathbf{p}, \mathbf{1}, S_x)}{D(\mathbf{p}, \mathbf{1}, \mathbf{1})} = \\
    \dfrac{\det \begin{bmatrix}
    p_1-1 & p_1 - 1 & 0 &  R \\
    p_2 & p_2 - 1 & 1 &  S \\
    p_3 & p_3 & 0 &  T \\
    p_4 & p_4 & 1 &  P
\end{bmatrix} }{\det \begin{bmatrix}
    p_1-1 & p_1 - 1 & 0 &  1 \\
    p_2 & p_2 - 1 & 1 &  1 \\
    p_3 & p_3 & 0 &  1 \\
    p_4 & p_4 & 1 &  1
\end{bmatrix}} = \\ 
    \dfrac{\det \begin{bmatrix}
    0 & p_1 - 1 & 0 &  R \\
    1 & p_2 - 1 & 1 &  S \\
    0 & p_3 & 0 &  T \\
    0 & p_4 & 1 &  P
\end{bmatrix} }{\det \begin{bmatrix}
    0 & p_1 - 1 & 0 &  1 \\
    1 & p_2 - 1 & 1 &  1 \\
    0 & p_3 & 0 &  1 \\
    0 & p_4 & 1 &  1
\end{bmatrix}} = \\ 
\frac{(1-p_1)T + Rp_3}{(1-p_1) + p_3} = \\
    \frac{P(T-R)+\chi(R(T-S)-P(T-R))}{(T-R)+\chi(R-S)}
\end{align*}
\endgroup

\begin{example}
We can again make this more concrete, going back to our example of $(T, R, P, S) = (5, 3, 1, 0)$. We write out the previously calculated probabilities from equation \ref{eqn:pvalues}:
\begin{align*}
    \mathbf{p}=(1-2 \phi(\chi-1), 1-\phi(4 \chi+1), \phi(\chi+4), 0)
\end{align*}
and the bounds for $\phi$ from inequality \ref{eqn:phibound} become:

\begin{align*}
0 < \phi \leq \min \left\{ \frac{1}{4\chi + 1}, \frac{1}{\chi + 4}\right\} = \frac{1}{4\chi + 1}
\end{align*}

for $\chi>1$. We can also calculate the expected score in the stationary distribution for $X$ assuming $Y$ always cooperates:
\begin{align*}
s_x = \frac{P(T-R)+\chi(R(T-S)-P(T-R))}{(T-R)+\chi(R-S)}
 =\frac{2+13 \chi}{2+3 \chi} = 1 + \frac{10}{\frac{2}{\chi} + 3}
\end{align*}

From the last expression we can see that $X$'s expected score is always greater than 3 - the mutual cooperation score and as $\chi \rightarrow \infty$ the score $s_x$ goes to $\frac{13}{3}$. Using the relation between the expected scores of $X$ and $Y$ from \ref{eqn:ratio}, we can also calculate the score $s_y$, when $Y$ always cooperates:

\begin{align*}
    s_y = \frac{s_x-P}{\chi} + P = \frac{\frac{2+13 \chi}{2+3 \chi} -1}{\chi} + 1 = \frac{12+3 \chi}{2+3 \chi} = 1 + \frac{10}{2+3\chi}
\end{align*}

Here $s_y$ is always less than 3 and as $\chi \rightarrow \infty$ the score $s_y$ goes to $1$.

\smallskip
We can now try to work out concrete strategies. Firstly let's pick an extortion factor of 3 and $\phi = \frac{1}{26}$ the value in the middle of the possible range. Then we get $\mathbf{p} = (\frac{11}{13}, \frac{1}{2}, \frac{7}{26}, 0)$, at best (when $Y$ always cooperates) the expected scores work out to $s_x = \frac{41}{11}$ and $s_y = \frac{29}{11}$. 

\smallskip
We can also try a strategy with $\chi = 2$ and $\phi = \frac{1}{18}$. Then $\mathbf{p} = (\frac{8}{9}, \frac{1}{2}, \frac{1}{3}, 0)$ and in the best case scenario we have $s_x = \frac{7}{2}$ and $s_y = \frac{9}{4}$. 

\smallskip

Finally we can pick a fair strategy i.e. one with $\chi = 1$. Interestingly one of the edge values $\phi = \frac{1}{5}$ gives the Tit-For-Tat strategy i.e. $\mathbf{p} = (1, 0, 1, 0)$, but picking $\phi$ again at the center we get $\phi=\frac{1}{10}$. Then $\mathbf{p} = (1, \frac{1}{2}, \frac{1}{2}, 0)$. 

\end{example}

\subsubsection{Can Y escape X's strategy?}
In most of this section we have assumed $X$ and $Y$ to be playing fixed strategies, thus inducing an (irreducible) Markov Chain leading to the stationary distribution of states. In discovering ZD strategies Press and Dyson have given very clear strategies for $X$ to impose a linear constraint between their expected scores in the stationary distribution. However we could still wonder whether $Y$ could just keep changing its strategy to escape reaching this stationary distribution altogether? This is, however, not possible. 

\medskip 

Suppose $X$ plays some ZD strategy, $\mathbf{p}$. For any strategy $\mathbf{q}$, we define $\mathbf{M(q)} \coloneqq \mathbf{M(p,q)}$, where $\mathbf{M(p,q)}$ is the stochastic matrix defined in \ref{eqn:markovmat}. Now notice for set of strategies (indexed by i) $\{ \mathbf{ q^{(i)} }  \}_{i=1}^N$ for some integer N, we have:

\begin{align*}
        \langle \mathbf{M(q^{(i)})} \rangle_i \coloneqq \frac{1}{N} \sum_{i=1}^{N} \mathbf{M(q^{(i)})} = \\ \frac{1}{N} \sum_{i=1}^{N} \begin{bmatrix}
    p_1q_1^{(i)} & p_1(1-q_1^{(i)}) & (1-p_1)q_1^{(i)} &  (1-p_1)(1-q_1^{(i)}) \\
    p_2q_3^{(i)} & p_2(1-q_3^{(i)}) & (1-p_2)q_3^{(i)} &  (1-p_2)(1-q_3^{(i)}) \\
    p_3q_2^{(i)} & p_3(1-q_2^{(i)}) & (1-p_3)q_2^{(i)} &  (1-p_3)(1-q_2^{(i)}) \\
    p_4q_4^{(i)} & p_4(1-q_4^{(i)}) & (1-p_4)q_4^{(i)} &  (1-p_4)(1-q_4^{(i)})
\end{bmatrix} = \\ \begin{scriptsize}
    \frac{1}{N} \begin{bmatrix}
    p_1(\sum_{i=1}^{N}q_1^{(i)}) & p_1(\sum_{i=1}^{N}(1-q_1^{(i)})) & (1-p_1)(\sum_{i=1}^{N} q_1^{(i)}) &  (1-p_1)(\sum_{i=1}^{N}(1-q_1^{(i)})) \\
    p_2(\sum_{i=1}^{N}q_3^{(i)}) & p_2(\sum_{i=1}^{N}(1-q_3^{(i)})) & (1-p_2)(\sum_{i=1}^{N}q_3^{(i)}) &  (1-p_2)(\sum_{i=1}^{N}(1-q_3^{(i)})) \\
    p_3(\sum_{i=1}^{N}q_2^{(i)}) & p_3(\sum_{i=1}^{N}(1-q_2^{(i)})) & (1-p_3)(\sum_{i=1}^{N}q_2^{(i)}) &  (1-p_3)(\sum_{i=1}^{N}(1-q_2^{(i)})) \\
    p_4(\sum_{i=1}^{N}q_4^{(i)}) & p_4(\sum_{i=1}^{N}(1-q_4^{(i)})) & (1-p_4)(\sum_{i=1}^{N}q_4^{(i)}) &  (1-p_4)(\sum_{i=1}^{N}(1-q_4^{(i)})) \end{bmatrix} \end{scriptsize} = \\
    \begin{bmatrix}
    p_1\langle q^{(i)}_{1} \rangle_i  & p_1(1-\langle q^{(i)}_{1} \rangle_i)  & (1-p_1)\langle q^{(i)}_{1} \rangle_i  &  (1-p_1)(1-\langle q^{(i)}_{1} \rangle_i ) \\
    p_2\langle q^{(i)}_{3} \rangle_i  & p_2(1-\langle q^{(i)}_{3} \rangle_i ) & (1-p_2)\langle q^{(i)}_{3} \rangle_i  &  (1-p_2)(1-\langle q^{(i)}_{3} \rangle_i ) \\
    p_3\langle q^{(i)}_{2} \rangle_i  & p_3(1-\langle q^{(i)}_{2} \rangle_i ) & (1-p_3)\langle q^{(i)}_{2} \rangle_i  &  (1-p_3)(1-\langle q^{(i)}_{2} \rangle_i ) \\
    p_4\langle q^{(i)}_{4} \rangle_i  & p_4(1-\langle q^{(i)}_{4} \rangle_i ) & (1-p_4)\langle q^{(i)}_{4} \rangle_i  &  (1-p_4)(1-\langle q^{(i)}_{4} \rangle_i ) 
\end{bmatrix} = \mathbf{M(\langle q^{(i)} \rangle_i)}
\end{align*}

where $\langle q^{(i)} \rangle_i = \frac{1}{N} \sum_{i=1}^{N} \mathbf{q^{(i)}}$.\footnote{We will use this notation for averages, the subscript shows what we take the average over.} Now suppose we play N (where N is large) rounds of the IPD, where $X$ plays the fixed strategy $\mathbf{p}$ and $Y$ plays the strategy $\mathbf{q}^{(i)}$ at round i, where $i = 1, 2, ... N$. Let $\alpha_i$ denote the state in the i-th round of the game, where $\alpha_i \in \{cc, cd, dc, cc\}$. We can also let $M_{\alpha_i \alpha_{i+1}}(\mathbf{q}^{(i)})$ denote the probability of going from state $\alpha_i$ to $\alpha_{i+1}$ in the i-th round. Then let $N_\beta = \sum_{i=1}^{N} \mathbbm{1}_{\{(XY)_{i} = \beta\}}$ count the number of visits to state $\beta$, then we can calculate the expected value of it:

\begin{align*}
    \mathbbm{E}(N_\beta) &=\sum_{i=1}^{N} M_{\alpha_{i} \beta}\left(\mathbf{q^{(i)}}\right) \\
&=\sum_{\alpha} \sum_{i \mid \alpha} M_{\alpha \beta}\left(\mathbf{q^{(i)} }\right) \\
&=\sum_{\alpha} N_{\alpha}\left\langle M_{\alpha \beta}\left(\mathbf{q^{(i)} }\right)\right\rangle_{i \mid \alpha} \\
&=\sum_{\alpha} N_{\alpha} M_{\alpha \beta}\left(\left\langle \mathbf{q^{(i)} }\right\rangle_{i \mid \alpha}\right)
\end{align*}

where $i \mid \alpha$ are values of i s.t. $\alpha_i = \alpha$. The third equality follows by N being large. We can also define a probability of being in state $\beta$, by $P_\beta = \frac{1}{N} \mathbbm{E}(N_\beta)$. Combining the two results we see that (dividing the previous equation by N):

\begin{equation}
P_{\beta}=\sum_{\alpha} P_{\alpha} M_{\alpha \beta}\left(\left\langle \mathbf{q^{(i)}}\right\rangle_{i \mid \alpha}\right)
\end{equation}

We can thus see that the behaviour of the system is just as if $Y$ played the fixed strategy $\mathbf{q}$, where $\displaystyle q_\alpha = \left\langle q^{(i)}_{\alpha} \right\rangle_{i \mid \alpha}$. Hence we can see that $Y$ can't escape an equilibrium, since its single round strategy doesn't matter in the long run. 

\subsubsection{Discussion}

Before we go on to praise the paper we must quail expectations back a bit. Extortionate strategies will not go on to crush all other strategies. In fact against the Tit-For-Tat player the game would just end up with a defect vs defect scenario eventually, since both strategies surely defect given a $DD$ state previously. However in a tournament an extortionate strategy may go on to extort some players, which the Tit-For-Tat strategy may fail to take advantage of. Similarly using strategies that set $Y$'s score, we might come up against an Defector and then we might end up on the losing side (assuming we are also not always defecting). We must understand these strategies are not invincible and have not just defeated the IPD. 

\medskip
However in their paper Press and Dyson also discuss the possibility of $Y$ being an \textbf{evolutionary player}. An evolutionary player is one, which seeks to maximize its own score $s_y$, by making incremental changes to $\mathbf{q}$, disregarding the strategy employed by $X$ and without attempting to alter his behaviour. This situation is perfectly suited for the \textbf{extortionate strategy}. As we saw previously the best strategy for $Y$ against and extortioner is to always cooperate, hence this is what the evolutionary process will converge to. However, due to the linear relation of the scores imposed by the extortionate strategy $X$ will benefit even more from $Y$ cooperating and thus Press and Dyson's paper has given a very straightforward way of extorting an additional payoff against an evolutionary player. 

\medskip

Press and Dyson's paper really shines in showing more than ever the benefit of having a theory of mind player \cite{PDarticle} i.e. a player who understands how it can influence other players through its strategies.

\medskip

Suppose $X$ tries to extort a theory of mind player $Y$ and as Press and Dyson phrase it "goes to lunch". Then the game becomes an ultimatum game \footnote{The ultimatum game is a two player game, where a proposer splits a pile of money. If the second player doesn't agree to the split both get nothing, otherwise they get to keep their shares.}, as $Y$ can either give in to the extortion and let $X$ receive a larger payoff (by cooperating everytime) or sabotage the game by forcing lower payoffs for both players by also playing sub-optimally. 

\medskip

However, we can also have a situation, where $X$ doesn't "go to lunch" against a theory of mind player $Y$. In this case the players will probably end up negotiating a fair strategy between themselves and there is no point "cheating" since it can easily be punished by the other player. 

\medskip

I personally strongly believe that the Press and Dyson paper has added a completely different dimension to the IPD. It is now very easy to beat evolutionary players and take advantage of some simple strategies. Even though many have been saddened by the implications of the paper "nullifying" the importance of qualities such as niceness etc, the paper has brought about a more profound idea of cooperation for theory of mind players, who can agree to set each others scores and play the game fairly. As Press himself said theory of mind players can now “trust but verify”\cite{edge}.  

\medskip 

That said ZD strategies, despite to their effectiveness against many non-sentient players can't dominate a population in the long run. In their 2013 paper \cite{shortcomingZD} Christoph Adami and Arend Hintze showed that initially ZD tend to dominate a population, but as the population comes to contain more ZD strategies, they tend to perform poorly against each other, pushing their score down in general. The paper shows that any boom in ZD strategies population is short lived. This is why these strategies aren't likely to commonly appear in an evolutionary setting.  

\medskip

Of course at the end of the day a tournament comes down to who participates in it, so we can't ever hope for a universally crushing strategy. All we can hope for is that IPD research pushes us forward and educates us more about cooperation, selfishness and other human qualities. Exactly in this way through stressing the importance of "theory of mind players" and encouraging cooperation Press and Dyson's strategy have done exactly that.  

\section{Produced Strategies}
\label{sec:cust}

Here are the strategies we have come across so far for the example game of $(T, R, P, S) = (5, 3, 1, 0)$:

\begin{enumerate}
  \setcounter{enumi}{-1}
  \label{list:strategies}
    \item Tit-For-Tat: $\mathbf{p} = (1, 0, 1, 0)$
    \item Always Defect: $\mathbf{p} = (0, 0, 0, 0)$
    \item Always Cooperate: $\mathbf{p} = (1, 1, 1, 1)$
    \item X sets Y score to 1: $\mathbf{p} = (\frac{1}{2}, 0, \frac{1}{4}, 0)$
    \item X sets Y score to 1: $\mathbf{p} = (\frac{3}{4}, \frac{1}{2}, \frac{1}{8}, 0)$
    \item X sets Y score to $\frac{7}{3}$: $\mathbf{p} = (\frac{3}{4}, 0, \frac{7}{8}, \frac{1}{2})$
    \item X sets Y score to 3: $\mathbf{p} = (1, \frac{1}{2}, \frac{3}{4}, \frac{1}{2})$
    \item Extortionate strategy with $\chi = 2$ and $\phi = \frac{1}{18}$: $\mathbf{p} = (\frac{8}{9}, \frac{1}{2}, \frac{1}{3}, 0)$
    \item Extortionate strategy with $\chi = 2$ and $\phi = \frac{1}{10}$: $\mathbf{p} = (1, \frac{1}{2}, \frac{1}{2}, 0)$
        \item Extortionate strategy with $\chi = 3$ and $\phi = \frac{1}{26}$: $\mathbf{p} = (\frac{11}{13}, \frac{1}{2}, \frac{7}{26}, 0)$
        \item Custom Strategy
\end{enumerate}

Out of pure interest and curiosity we can also try to devise a strategy of our own. 

\medskip
To motivate it, let's first introduce a different way of interpreting the Tit-For-Tat strategy. The usual way of understanding the strategy is thinking of it as a reciprocator, which just reciprocates the opponents previous move. However we could also think of it as generous predictor. The TFT strategy has a very short memory (just last round) and based on that memory it tries to predict the opponents action, using just frequency analysis. So if you defected in the last round, the TFT strategy predicts you'll defect again, so it also has to defect. Whereas if you cooperated in the last round, it predicts you will cooperate but it is generous so it also cooperates with you. Having this interpretation in mind we can devise a strategy, which tries to learn your one memory cooperation probabilities and tries to copy that strategy. However this time we will use frequency analysis over a longer set of rounds. 

\medskip

Hence our player $X$ will have a non time-homogeneous strategy. Inspired by RL, with some exploitation probability $1-\epsilon$ it will pick the action as per the distribution given by: 
\begin{align}
P(X_t = C | (XY)_{t-1} = xy) = \frac{\sum_{i=1}^{t-1} \mathbbm{1}_{\{Y_i = C \wedge (XY)_{i-1} = xy\}}}{N_{xy}}
\end{align}

where $N_{xy}$ as before counts the number of visits to state $xy$. In simple terms this is the estimate of the probability of player $Y$ cooperating given the previous state, calculated by taking the ratio of the number of times it cooperated after being in that state and the number of times it was in that state. With some exploration probability ($\epsilon$) our strategy will pick a random action. You may ask what exploration, our state space is just of size 4! But notice if you play against a stochastic strategy, then your model of their actions may not be accurate, hence it makes sense to explore the state space to see states and model all probabilities. It is important to notice that it's very easy to get stuck in this game, for example against a TFT strategy we could get into a defect vs defect everytime game but a random action now and then helps get out of this (unfortunately it may also actually also push us into the defect vs defect scenario sometimes). Furthermore the first $K$ rounds for exploration the agent will only be playing random actions, to collect data and be able to model the opponents action probabilities accurately. 
We will now proceed to simulate the games and see how the players perform! 

\medskip

\begin{algorithm}[H]
    \caption{Custom Strategy}

    \textbf{Require:} Exploration coefficient $\epsilon$, Explorations rounds $K$  

    \textbf{Initialize:} Set counter of states N to zeros, Set counter of opponent cooperations after a state, D to zeros. 
    \BlankLine
    \For{$t = 1$ \KwTo $K$}{
    $a \sim \operatorname{Ber}(0.5)$ \\
    Take action a \\
    Update D \\
    Update N
    }
    \For{$t = K+1$ \KwTo $\infty$}{
    $L \sim \operatorname{Ber}(1-\epsilon)$ \\
    \eIf{L=1}
    {
    Get last state $xy$ \\
    $a \sim \operatorname{Ber}(D_{xy}/N_{xy})$ \\
    Take action a \\
    Update D \\
    Update N}
    {
    $a \sim \operatorname{Ber}(0,5)$ \\
    Take action a \\
    Update D \\
    Update N
    }}
\end{algorithm} 

\smallskip

where $\operatorname{Ber(\alpha)}$ is the Bernoulli distribution with rate $\alpha$, if we sample an action from a Bernoulli distribution then a success is equivalent to cooperation. In the edge case where $N_{xy} = 0$, we sample the action from $\operatorname{Ber(0.5)}$.

\section{Tournament Results}

\subsection{Custom Strategy with Demo Strategies}
We can first look at the results of a tournament with the Demo Axelrod Library Strategies and the Custom Strategy, that is strategy 10 in the list (\ref{list:strategies}). The library contains the players: Cooperator (always cooperate), Defector (always defect), Tit-For-Tat, Grudger (always defect if opponent ever defected, otherwise cooperate), Random. We have chosen these opponents since their behaviour is well understood and thus it will make easier to interpret our own agent's behaviour.

\medskip

\begin{figure}[H]
\label{fig:01k20}
\centering
\begin{subfigure}{.5\textwidth}
  \centering
  \includegraphics[width=0.9\linewidth]{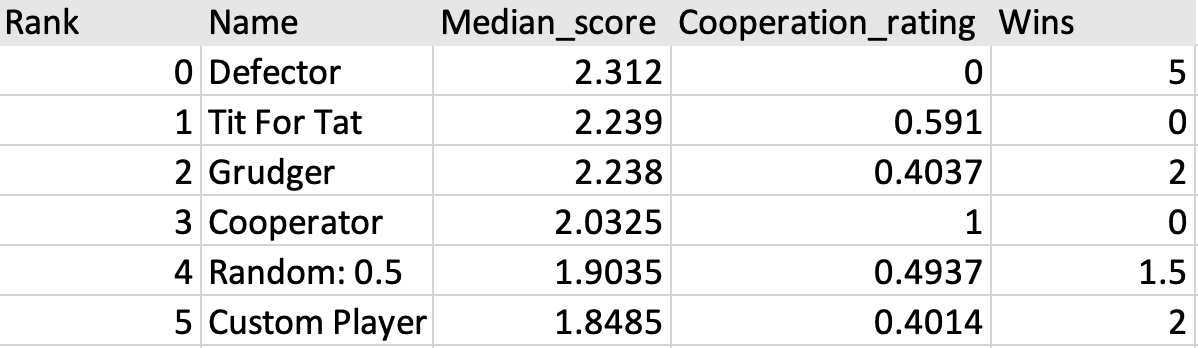}
  \caption{Table summarizing results of the tournament}
\end{subfigure}%
\begin{subfigure}{.5\textwidth}
  \centering
  \includegraphics[width=0.9\linewidth]{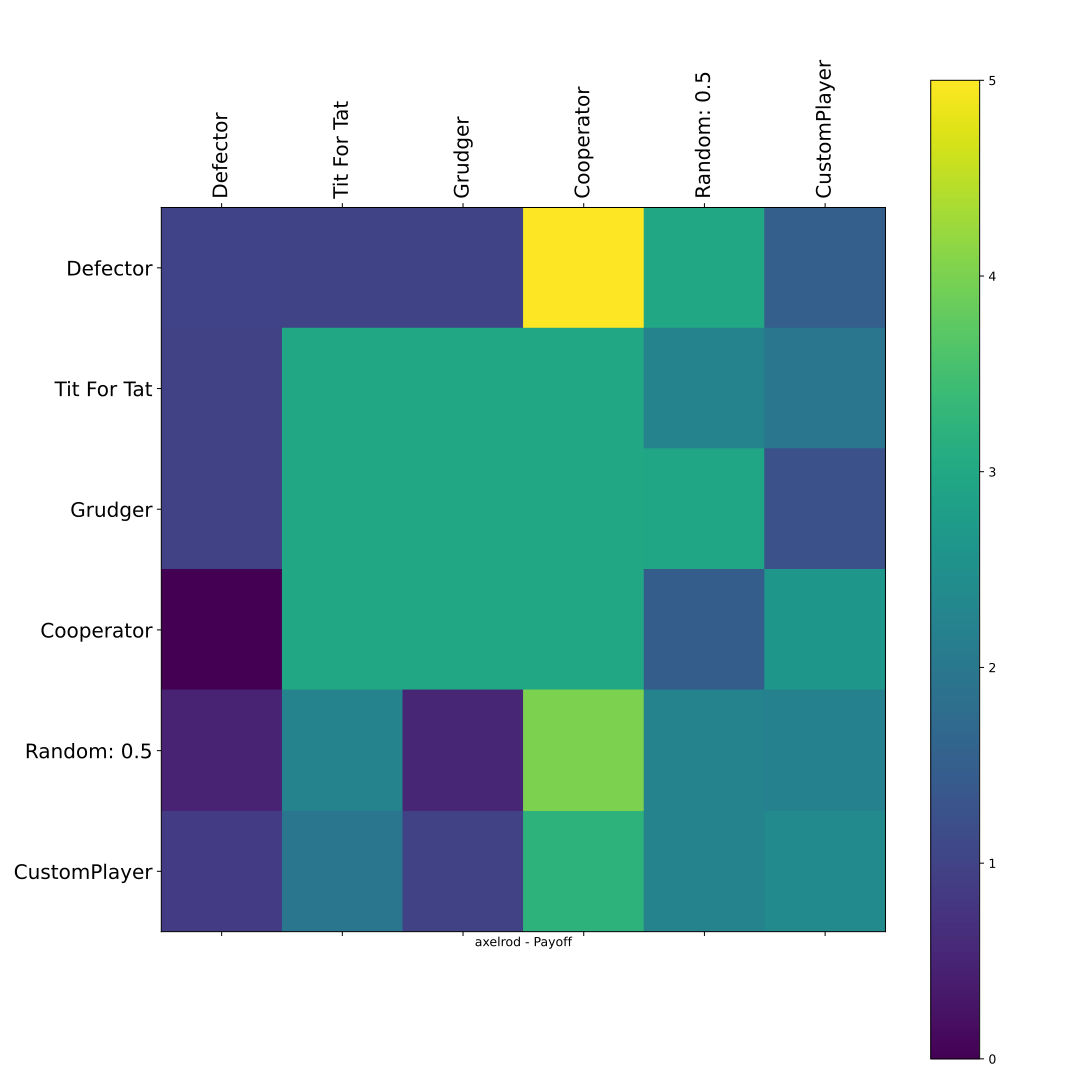}
  \caption{Payoff matrix for particular games}
\end{subfigure}
\caption{Results of experiment with $\epsilon = 0.1$ and $K=20$}
\end{figure}

\begin{figure}[H]
\label{fig:02k20}
\centering
\begin{subfigure}{.5\textwidth}
  \centering
  \includegraphics[width=0.9\linewidth]{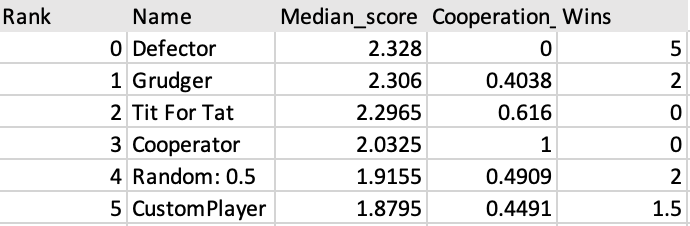}
  \caption{Table summarizing results of the tournament}
\end{subfigure}%
\begin{subfigure}{.5\textwidth}
  \centering
  \includegraphics[width=0.9\linewidth]{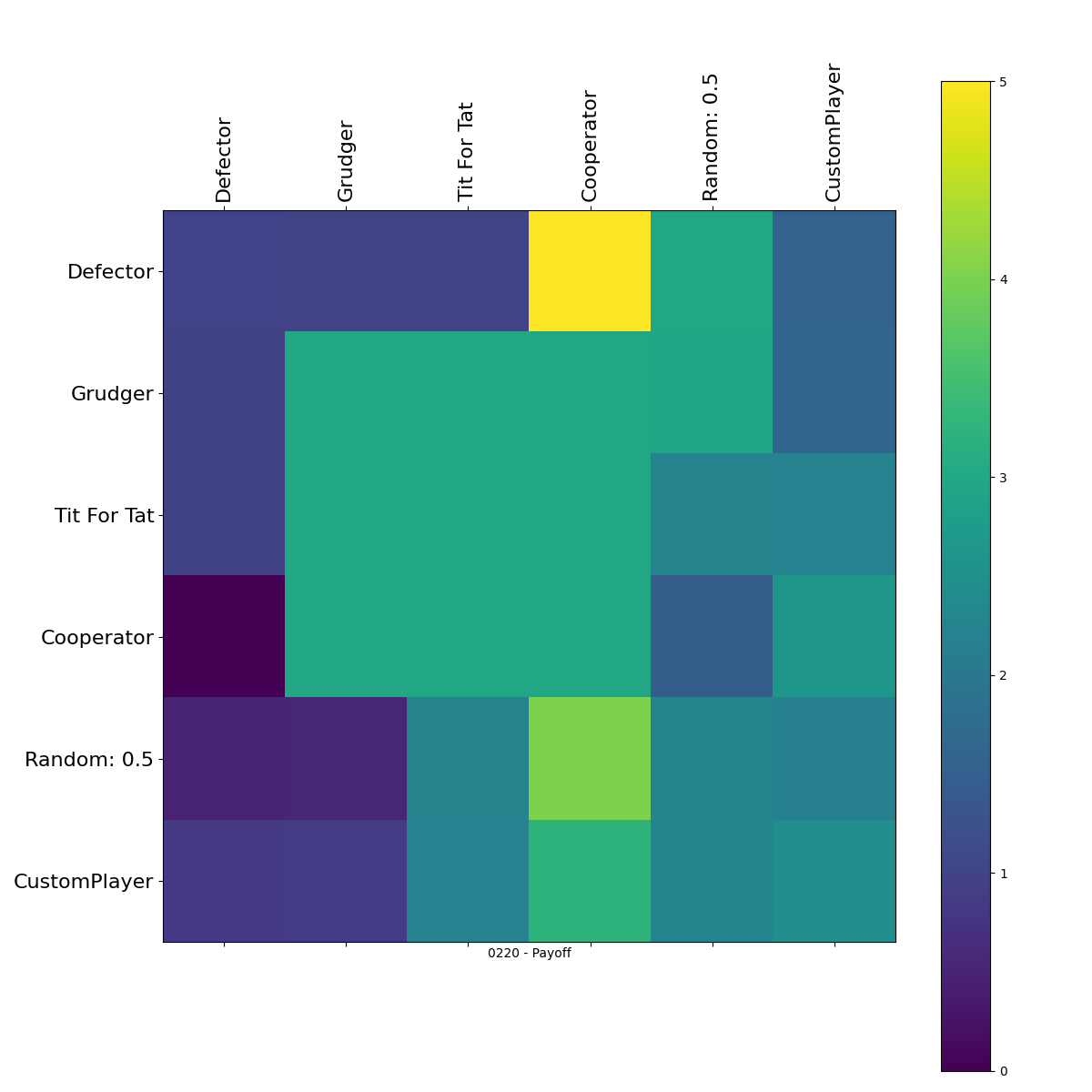}
  \caption{Payoff matrix for particular games}
\end{subfigure}
\caption{Results of experiment with $\epsilon = 0.2$ and $K=20$}
\end{figure}

\begin{figure}[H]
\label{fig:03k20}
\centering
\begin{subfigure}{.5\textwidth}
  \centering
  \includegraphics[width=0.9\linewidth]{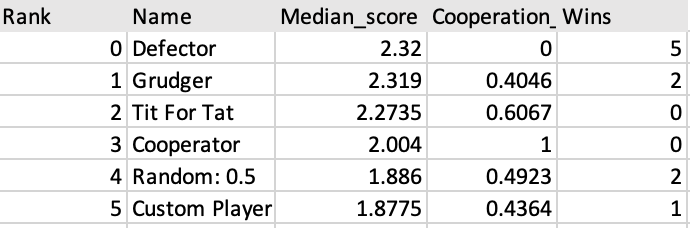}
  \caption{Table summarizing results of the tournament}
\end{subfigure}%
\begin{subfigure}{.5\textwidth}
  \centering
  \includegraphics[width=0.9\linewidth]{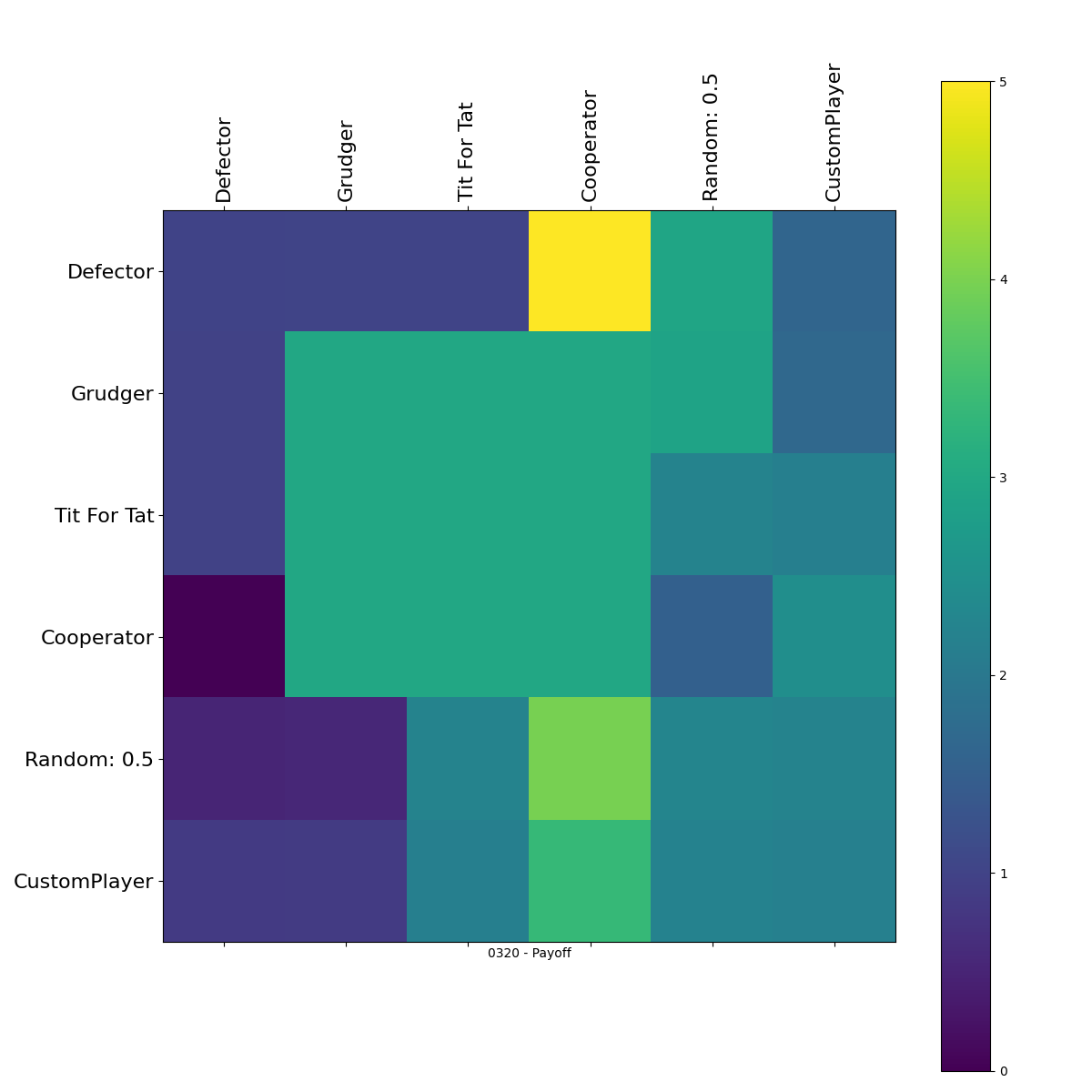}
  \caption{Payoff matrix for particular games}
\end{subfigure}
\caption{Results of experiment with $\epsilon = 0.3$ and $K=20$}
\end{figure}

\begin{figure}[H]
\label{fig:01k40}
\centering
\begin{subfigure}{.5\textwidth}
  \centering
  \includegraphics[width=0.9\linewidth]{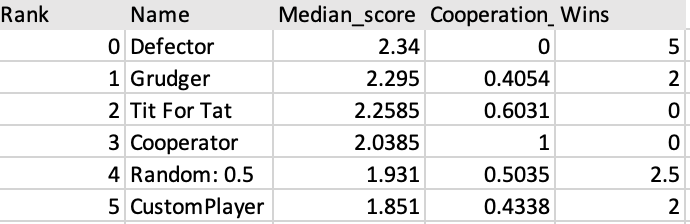}
  \caption{Table summarizing results of the tournament}
\end{subfigure}%
\begin{subfigure}{.5\textwidth}
  \centering
  \includegraphics[width=0.9\linewidth]{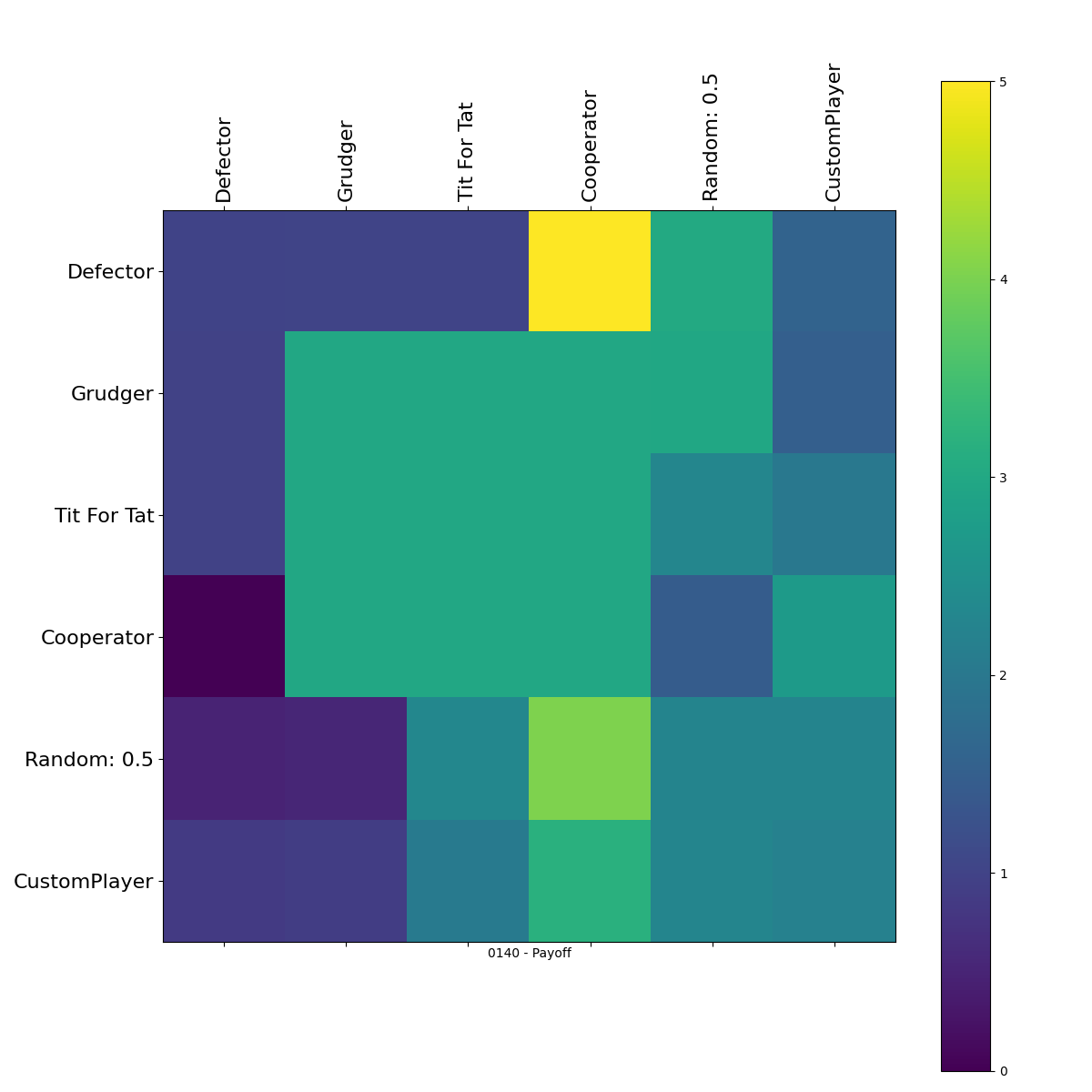}
  \caption{Payoff matrix for particular games}
\end{subfigure}
\caption{Results of experiment with $\epsilon = 0.1$ and $K=40$}
\end{figure}

\begin{figure}[H]
\centering
\begin{subfigure}{.5\textwidth}
  \centering
  \includegraphics[width=0.9\linewidth]{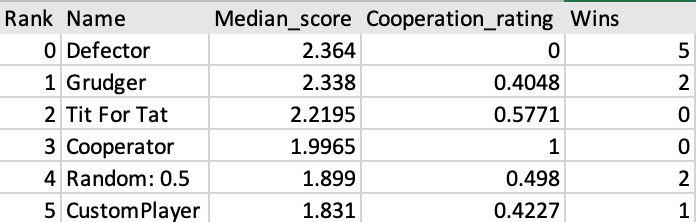}
  \caption{Table summarizing results of the tournament}
\end{subfigure}%
\begin{subfigure}{.5\textwidth}
  \centering
  \includegraphics[width=0.9\linewidth]{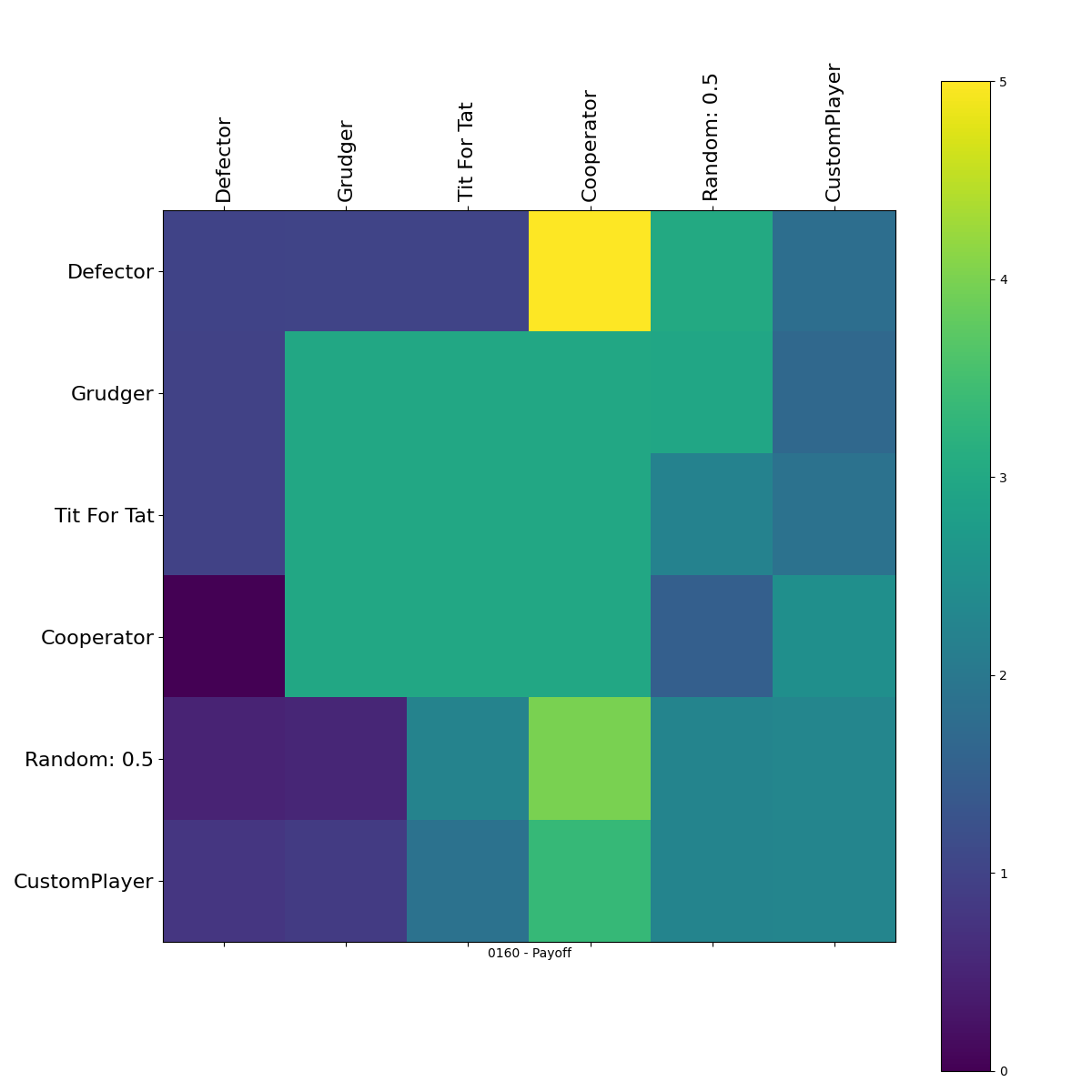}
  \caption{Payoff matrix for particular games}
\end{subfigure}
\caption{Results of experiment with $\epsilon = 0.1$ and $K=60$}
\end{figure}

\begin{figure}[H]
\centering
\begin{subfigure}{.5\textwidth}
  \centering
  \includegraphics[width=0.9\linewidth]{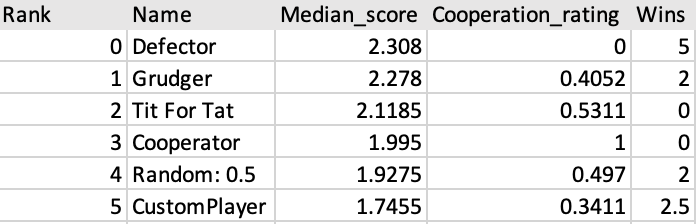}
  \caption{Table summarizing results of the tournament}
\end{subfigure}%
\begin{subfigure}{.5\textwidth}
  \centering
  \includegraphics[width=0.9\linewidth]{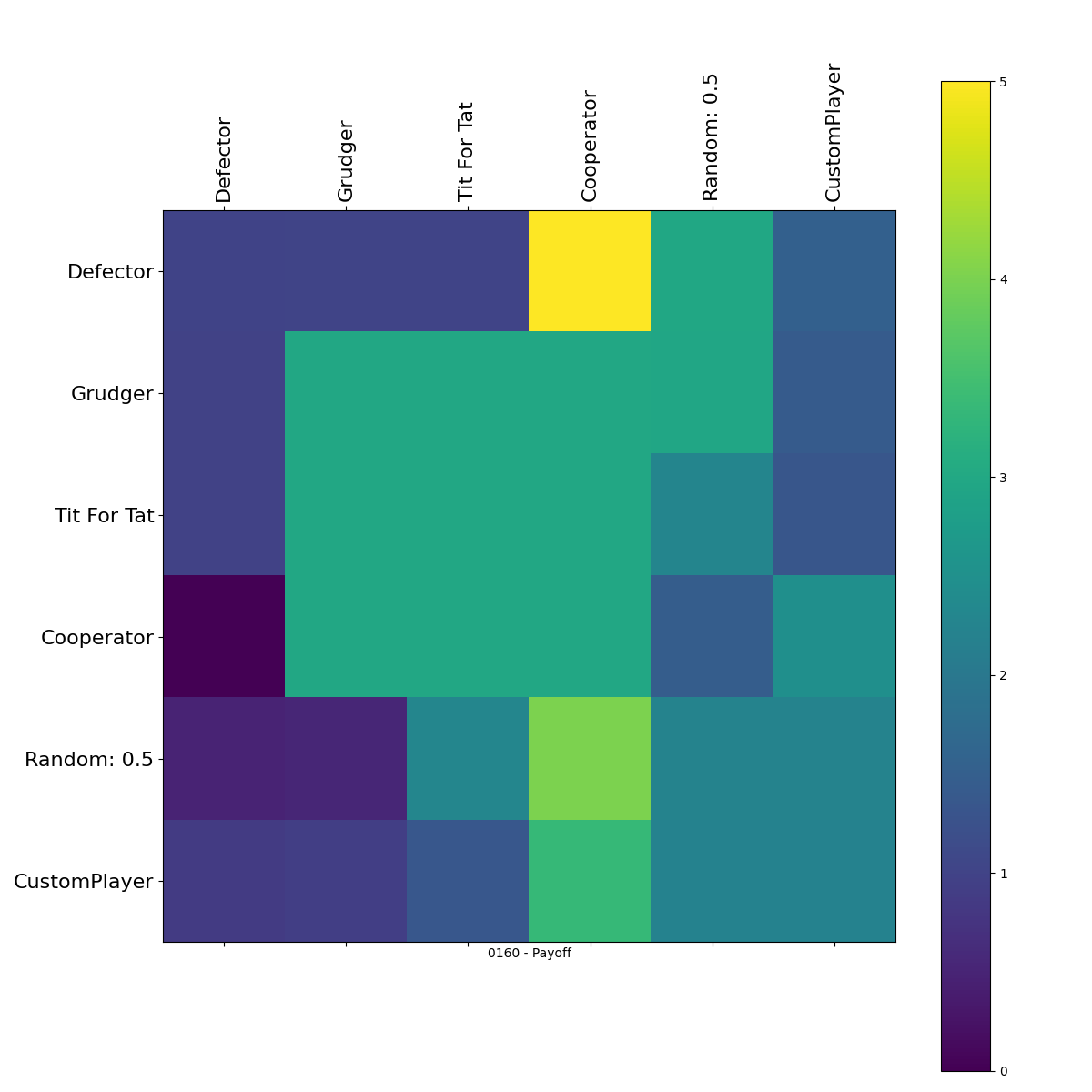}
  \caption{Payoff matrix for particular games}
\end{subfigure}
\caption{Results of experiment with $\epsilon = 0$ and $K=60$}
\end{figure}

\begin{figure}[H]
\centering
\begin{subfigure}{.5\textwidth}
  \centering
  \includegraphics[width=0.9\linewidth]{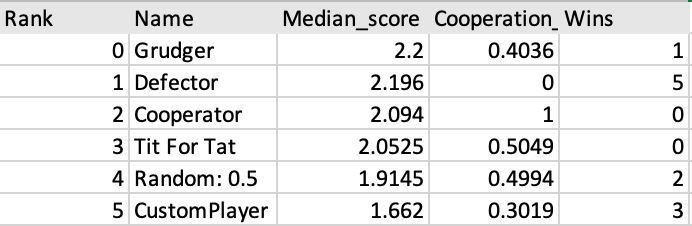}
  \caption{Table summarizing results of the tournament}
\end{subfigure}%
\begin{subfigure}{.5\textwidth}
  \centering
  \includegraphics[width=0.9\linewidth]{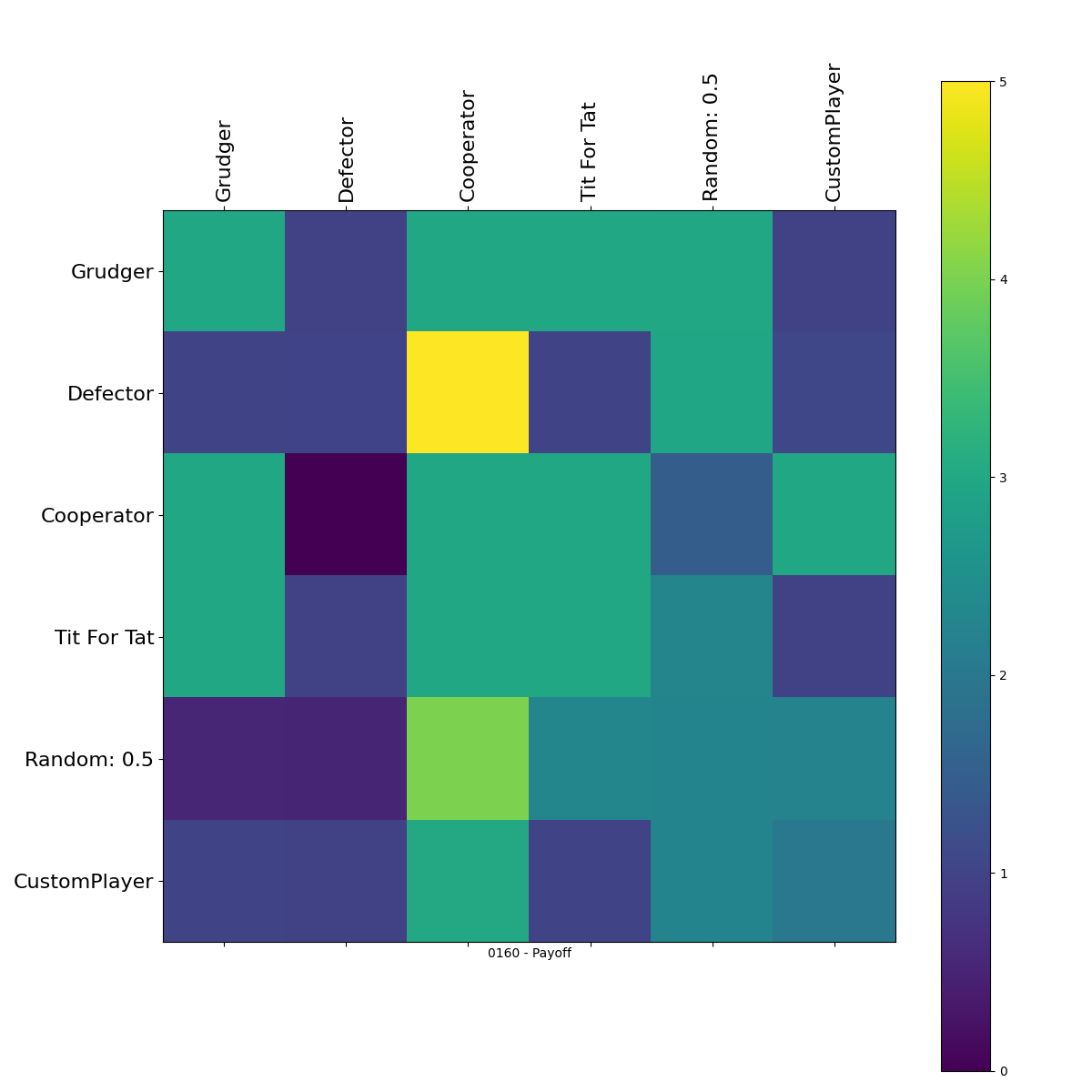}
  \caption{Payoff matrix for particular games}
\end{subfigure}
\caption{Results of experiment with $\epsilon = 0$ and $K=0$}
\end{figure}

\medskip

In the figures above we present the outcomes of the tournaments run with the Custom Player with different values of $\epsilon$ and $K$. 

\medskip
Unfortunately our player doesn't tend to perform as well as we would have liked against this set of opponents - in fact it loses all the tournaments. There's probably many reasons for that we could give, but before that let's study some effects of our parameters $\epsilon$ and $K$ on the outcomes of the tournaments. It's important to mention that we make use of random actions if we don't have any knowledge about the state, and that for the first $K$ rounds we also make use of random actions, thus the results are stochastic and the exact values are only roughly indicative of any behaviours. 

\medskip

We can first estimate the effect of $\epsilon$ on the tournaments. As we can see the strategies with higher $\epsilon$ tend to score more points - this can be better seen in how the Median Score changes as per the value of $\epsilon$. Clearly the Median Score of the Random player is always higher than our Custom Player in these tournaments and increasing our $\epsilon$ value just brings our player closer to the Random player, hence the results are justified. We must however ask, why does exploitation not work well? From the Summary Tables, we can see that the higher $\epsilon$ valued players cooperate more, this is especially visible for the experiments when $\epsilon=0$, where we only cooperate roughly 3 out of 10 times. Cooperating more often works particularly well against the Tit-For-Tat player, as can be also seen in the Pay-Off matrices - the payoff matrices suggest a higher payoff for higher $\epsilon$ players against the TFT strategy. As previously hypothesized the exploration factor thus helps get out of the defect vs defect loop. $\epsilon$ was also in theory supposed to help us better model the opponent cooperation distribution in rarely visited states, but this is a non-issue against the given opponents, since they have deterministic actions. 

\medskip

We can now consider different values of $K$ i.e. the initial period for which we just explore. We can see that the value doesn't drastically change our Mean Score. More explicitly for a fixed value of $\epsilon=0.1$, considering the Mean Score, there is a slight increase when we go from $K=20$ to $K=40$, but a larger decrease from $K=40$ to $K=60$. One might ask why is this? We have already sort of addressed this in the previous paragraph, the value $K$ was supposed to give us a period of exploration, to discover the opponents cooperation distribution, but here as mentioned are deterministic, thus we in fact only need $K=0$ (a value of $K=0$ means we will just start exploiting from the beginning and if we come up against an unknown state, we choose a random action as specified previously). To do what we were planning to do with this strategy, we only need to come across a state once to be able to model the opponents action distribution. Hence in this case a higher $K$ value just gives us more random actions at the beginning without adding anything for our strategy towards the later part. Any change of the Mean Score would only be due to picking random actions for a different number of rounds.

\medskip

\subsubsection{Conclusions}

Finally we can move on to why our strategy hasn't performed well. There is more then one answer to this question, so firstly we will try to bring up some problems with the strategy and towards the end we will suggest changes. 

\smallskip

There is one obvious answer as to why we performed badly - we didn't have the right opponents! This strategy is bad against this set of opponents.  There is a clear reason why, we just don't cooperate as much. We can consider the actions we take against all opponents. With this strategy we would start off acting randomly, which would put off the Grudger - hence leading to him and us both defecting everytime. With this strategy since we are copying the opponent, we couldn't exploit the Cooperator and similarly we got into a defect vs defect game against the Defector. The only real room to maneuver was against the TFT and random player, with whom our average payoffs tent do oscillate around 2 and 3. Against the Random player, assuming we learn to mimic them perfectly, all the states become equally likely, thus the expected payoff per round is $\frac{1}{4}(5+3+1) = 2$, while defecting each time would give us an expected payoff of 3, hence it would be worth defecting. These are small things our strategy fails to take note of and hence we can see that our behaviour leads to sub-optimal payoff for ourselves. However it is worth noting that because we end up copying the opponent is it also not easy to score points off of us - this is also represented in all the payoff matrices. For a similar reason we end up drawing/winning a lot of games - we don't get a lot of points, but we also don't let the opponent get a lot of points! 

\smallskip
I believe the strategy would work well will a more diverse set of opponents, aren't deterministic. But even then more importantly I think this strategy would perform better in games with a lot longer time scales. As mentioned previously we would be missing out a lot on early rewards when setting a high value of K, however the reward lost would be little if the total time horizon was 1000 or even 10000 (as opposed to the current 200). Perhaps I could suggest in a 10000 length game it would be interesting to separate out a long exploration phase, but have no exploration in the exploitation phase i.e. $\epsilon=0$, $K>200$. This way we could after learning the model start from "afresh" and just play the same was as our opponent. This way we also don't have to worry about unexplored states, since we should never have to hit them. However this analysis is left for the next time! This would also be a good strategy to play against a theory of mind player, who would quickly realize that to actually not lose a lot of potential payoffs, he would have to start cooperating, thus changing our cooperation probability distribution. But as mentioned previously for all these points a longer time horizon would be useful.

\subsection{Custom Strategy with ZD Strategies}

We can now run a tournament with all the strategies in out table \ref{list:strategies}. These are stochastic, so we would expect our Custom Strategy to just be a better version of the Tit-For-Tat player. Below are the results:

\begin{figure}[H]
\centering
\begin{subfigure}[b]{1\textwidth}
   \includegraphics[width=1.5\linewidth, center]{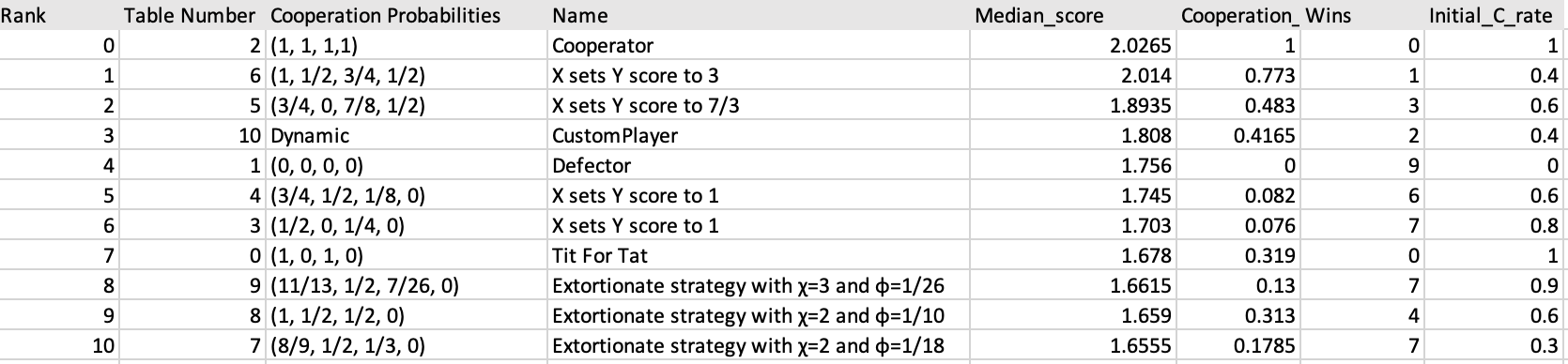}
   \caption{Table summarizing results of the tournament, table number same as in \ref{list:strategies}}
   \label{fig:Ng1} 
\end{subfigure}

\begin{subfigure}[b]{1\textwidth}
   \includegraphics[width=0.8\linewidth, center]{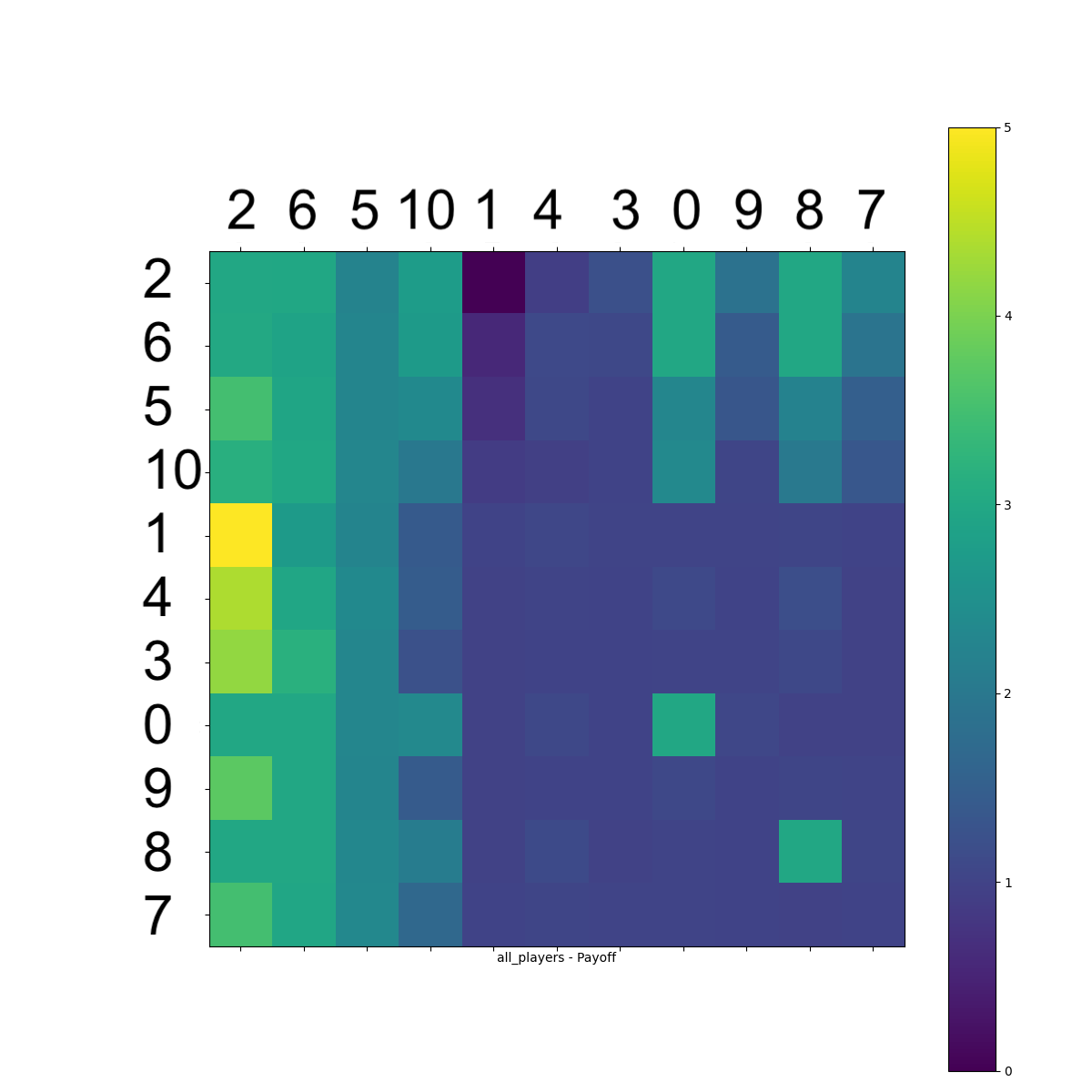}
   \caption{Payoff matrix for particular games, indexed by Table Number from \ref{fig:Ng1}}
   \label{fig:Ng2}
\end{subfigure}
\caption{Tournament results with all strategies we found. Custom strategy used with $\epsilon=0.1$ and $K=20$}
\end{figure}

We can see some very interesting results above, it is especially fascinating to see the ZD strategies at work, for example the strategy where $X$ sets $Y$'s score to $\frac{7}{3}$ especially stands out in \ref{fig:Ng2}. This tournament simulates a lot of strategies, hence we will mostly refer to the strategies by their table numbers to avoid ambiguity, for example the Cooperator is strategy number 2.

\medskip

 The Median Scores are very close for most of the strategies, except the Cooperator who really runs away with it. We can see most of the strategies included have a very high inclination to cooperate after a $CC$ state, thus it is not surprising that the Cooperator's Median Score is so high. 

\medskip

Next in the rankings come the strategies come other generous players. 3 other players have a Median Score above 1.8. This includes the generous ZD strategies setting the score of the opponent to 3 and $\frac{7}{3}$ respectively and our Custom Strategy. These again have an inclination to cooperate and, thus generally end up in $CC$ states many times. From the payoff matrix, we can see that the top 4 strategies have especially benefited by scoring high numbers against the bottom 4 strategies, that is the Tit-For-Tat player and the 3 extortionate strategies. 

\smallskip

On the other hand the center of the table is filled with the most "negative" strategies. That is the Defector and the two strategies setting the opponents score to 1. They score well against the top 3 strategies (mostly exploiting them with $CD$ rounds), but end up having a lot of $DD$ states against the other players, thus pushing their score down. 

\smallskip

I was surprised to see Tit-For-Tat towards the bottom of the table, but on second thought it is not that surprising since once it falls into the defect loop against the players (apart from the top 4) it stays there. It would be interesting to see how Generous Tit-For-Tat would have performed, it seems to be very similar to the strategies at the top. 

\smallskip

It's also interesting to see how the Extortionate strategies perform. They are clearly enforcing their desired outcome against the cooperator, but due to the strategies being non evolutionary in general they tend do get an overall low score against most and end up taking the last places in the table. We can also see that Extortionate strategies tend to perform badly against themselves, this is due to an extortion unbalance. Suppose we have two extortionate strategies $X$ and $Y$ with different factors $\chi = \chi_x$ and $\chi = \chi_y$ respectively. Then:

\begin{align*}
    \begin{cases}
    s_x - P = \chi_x(s_y - P) \\
    \chi_y(s_x - P) = \chi(s_y - P)
\end{cases} \implies
\begin{cases}
s_x = P \\
s_y = P
\end{cases}
\end{align*}

hence they both end up with the same payoff as if the both defected each time. 

\smallskip

What's very interesting is that the top 4 strategies all tend to have a positive probability of cooperating after a $DD$ state, this surely allowed them to escape the $DD$ loop against many of the bottom placed players. This is what the remaining strategies have failed to do, hence why they occupy lower positions. It's also very interesting to see how the number of games won plays absolutely no role. The player in the first position won no games and the player in the last position won 7 games! Finally it's important to mention that our one-memory players cooperate randomly in the first round (this can be seen by the initial cooperation column in \ref{fig:Ng1}), hence the results are stochastic and might have been quite different by just changing this small feature. 

\medskip

Our Custom Strategy as mentioned has performed quite well. It really shines against more stochastic opponents, learning their cooperation probabilities and performing like a better version of the Tit-For-Tat player as intended!

\newpage

\bibliographystyle{plain}
\bibliography{bibliography.bib}

\end{document}